\def\E{\mathbb{E}}
\newtheorem{remark}{Remark}
\newtheorem{theorem}{Theorem}
\newtheorem{lemma}{Lemma}
\newtheorem{proposition}{Proposition}
\newtheorem{h}{Hypothesis}
\newcommand{\keywords}[1]{\par\addvspace\baselineskip
\noindent\enspace\ignorespaces#1}
\begin{document}

\title{On the implied volatility of European and Asian call options under the stochastic volatility Bachelier model}

\author{Elisa Al\`os$^{\ast}$, Eulalia Nualart$^{\ast}$, and Makar Pravosud\thanks{ 
Universitat Pompeu Fabra and Barcelona School of Economics, Department of Economics and Business, Ram\'on Trias Fargas 25-27, 08005, Barcelona, Spain. 
EN acknowledges support from the Spanish MINECO grant PID2022-138268NB-100 and
Ayudas Fundacion BBVA a Equipos de Investigaci\'on Cient\'ifica 2021.}}
\maketitle

\begin{abstract}
 In this paper we study the short-time behavior of the at-the-money implied volatility for European and arithmetic Asian call options with fixed strike price.
 The asset price is assumed to follow the Bachelier model with a general stochastic volatility process. Using techniques of the Malliavin calculus such as the anticipating It\^o's formula we first compute the level of the implied volatility when the maturity converges to zero. Then, we find a short maturity asymptotic  formula for the skew of the implied volatility that depends on the roughness of the volatility model. We apply our general results to the SABR, fractional Bergomi and local volatility models, and provide some numerical simulations that confirm the accurateness of the asymptotic formula for the skew.
\end{abstract}

\keywords{Bachelier model, Stochastic volatility, European options, Asian options, Malliavin calculus, implied volatility}

\section{Introduction}\label{introduction}
The life of quantitative finance and derivatives pricing has started when Louis Bachelier presented an option pricing model in his Ph.D. thesis \cite{ASENS_1900_3_17__21_0}. The
Bachelier model assumes that asset prices follow a Brownian motion, a continuous-time stochastic process with normally distributed increments. Then, price changes are normally distributed with a constant volatility and the model allows for negative prices. Due to the possibility of negative prices, the Bachelier model has not gained popularity. Nowadays option pricing models are strongly based on the Black-Scholes model. Local,  stochastic and even fractional volatility models (see \cite{Alos2021c} for an introduction to these topics) are extensions of the Black-Scholes model, where the volatility is allowed to be a function of the price of the stock price (local), a diffusion process (stochastic) or a function of a fractional Brownian motion (fractional). As the Black-Scholes model is a geometric Brownian motion (following a log-normal distribution), all the above models are also exponentials of random variables, and they share the positivity of the stock price. This plausible and intuitive property has been assumed for a long time for the prices of the underlying asset of an option.

From a practical perspective, the Bachelier model has started to gain more attention when negative prices have been registered, for example, on April 20th 2020, when crude oil futures crossed the zero mark and took negative values. Moreover, in some commodities markets, the Bachelier model is more adequate to modeling price volatility, as commodity prices do not always follow the same multiplicative dynamics observed in the stock market. As a prompt response to enable markets to continue to function normally, the CME Clearing started to use the Bachelier model to cope with negative underlying prices.  The empirical paper by \cite{Galeeva2022} studies the turbulent period on the oil-futures options markets which was observed in the middle of February 2020. The authors analyzed the observed implied volatilities by applying the Bachelier and Black models to prevail oil-futures option prices.

After the 2008 financial crisis and the adoption of negative interest rate policies by central banks in Europe and Japan, the Bachelier model became a useful tool, since it allows interest rates to approach or fall below zero. In particular, this model becomes interesting in the swaption market (options on interest rate swaps). A detailed analysis on the reasons to use a normal model rather than a log-normal one in the case of interest rates is presented in \cite{HoGoodmanLaurieS2003}. The recent paper by \cite{https://doi.org/10.1002/fut.22315} is a comprehensive review of various topics related to the Bachelier model for both researchers and practitioners. In particular, they cover topics such as the implied volatility inversion problem, volatility conversion between related models, Greeks and hedging, as well as pricing of exotic options.

Concerning exact analytical results, in the paper by \cite{https://doi.org/10.1111/j.1467-9965.2007.00326.x}, the authors compare the option pricing problem under the Bachelier and Black-Scholes models. They find that prices coincide very well and explain by the means of chaos expansion theory why the Bachelier model yields good short-time approximations of prices and volatilities. In addition to that, the computation of prices and implied volatilities under the Bachelier model model has been presented in, for example, \cite{Terakado2019}. However, the existing literature lacks the rigorous analysis of the behaviour of the ATM level and skew of the implied volatility under the Bachelier model with stochastic volatility. By contrast, in the case of a regular Black-Scholes framework, there are numerous papers that study the short end maturity behaviour of the implied volatility under a variety of models such as the  Heston, the SABR and fractional Begomi. For example, see \cite{Alos2007a}, \cite{Figueroa-Lopez2016},   \cite{doi:10.1080/14697688.2016.1197410}, \cite{fouque}, \cite{Pirjol2016}, \cite{euch}.

The goal of this paper is to fill the gap in the existing literature and to analyze the option pricing problem when we allow the volatility of the Bachelier model to be a stochastic process. That is, to study option prices and implied volatilities for local, stochastic and fractional Bachelier models, both for vanilla and for exotic, in particular, Asian options. Specifically, we present analytical results for the behaviour of the level and the skew of the implied volatility for a general stochastic volatility Bachelier model. Our main tool for proving these results is Malliavin calculus, see Appendix \ref{MCintro} for an introduction to this topic.
We then apply our asymptotic formulas to the SABR, fractional Bergomi and local volatility models. Additionally, we compare our first order approximations of the implied volatility with the classical Monte Carlo method across different moneyness and maturity . The presented results are extension of the papers  \cite{Alos2006}, \cite{Alos2018} and \cite{mp2022}, where the Malliavin calculus is also used in order to obtain short maturity asymptotic for the level and skew of European and  Asian options under the Black and Scholes model with stochastic volatility.

The main advantage of the Malliavin calculus approach is its generality and simplicity. The results in this paper lead to straightforward expressions in terms of Malliavin derivatives. These derivatives can be easily and explicitly computed for most volatility models (see, for example, \cite{Alos2021c}), resulting in simple algebraic expressions which are uniform over all the model parameters. Thus, our results provide general formulas that can be directly applied to different scenarios by substituting the volatility derivatives with their corresponding explicit expressions in each case (see Section 5). There exist other techniques in order to obtain short-maturity asymptotic expansions of the implied volatility. See for example \cite{euch} and the references therein. In this paper, the authors obtain a small-time Edgeworth expansion of the density of an asset price under a general stochastic volatility models, from which expansions of put option prices and at-the-money implied volatility follow. However, it is not clear if those methods extend in the case of exotic options such as Asian options. For local volatility models, a Taylor expansion for short maturity asymptotics for Asian options is obtained in \cite{Pirjol2016}.

The paper is organized as follows. Section 2 is devoted to the statement of the problem and main results. Intermediary steps and the proofs of the main results are presented in Sections 3 and 4. Finally, in Section 5 we present and discuss the results of the numerical study.

\section{Statement of the problem and main results}\label{statement-of-the-problem-and-notation}

Let  $T>0$ and consider the following model for asset prices $S$ (without lost of generality, we take the interest rate equal to zero for the sake of simplicity) in a time interval $[0,T]$
\begin{equation} \begin{split}\label{B_Epm}
dS_t &= \sigma_tdW_t\\
 W_t &= \rho W_t' + \sqrt{(1-\rho^2)}B_t,
 \end{split}
\end{equation} where $S_0>0$ is fixed, $W_t$, $W_t'$, and $B_t$ are three standard Brownian motions on $[0,T]$
defined on the same risk-neutral complete probability space $(\Omega, \mathcal{G}, \mathbb{P})$. We assume that \(W_t'\) and
\(B_t\) are independent and \(\rho\in (-1,1)\) is the correlation coefficient between \(W_t\) and 
\
\(W_t'\), and that $\sigma_t$ is an a.s. continuous and square integrable process adapted to the filtration generated by $W'_t$. We denote by $\E_t$ the conditional expectation with respect to the filtration generated by $W_t$.

Observe that when the volatility $\sigma$ is constant, equation (\ref{B_Epm}) is the classical Bachelier model. Notice that, in this case, asset prices are normally distributed.

We consider the following hypotheses on the volatility of the asset price. 
\begin{h} \label{Hyp1}
There exist positive constants $c_1$ and $c_2$ such that for all $t \in [0,T]$,
$$
c_1 \leq \sigma_t \leq c_2.
$$
\end{h}

%\begin{h}\label{Hyp2}
%For all $p\geq 1$ there exists $c>0$ and  $\gamma >0$ such that  for all $0\leq s\leq r\leq T \leq 1$,
%$$(\mathbb E|\sigma_r-\sigma_s|^p)^{1/p} \leq  c (r-s)^{\gamma}.$$
%\end{h}

\begin{h}\label{Hyp3}
$\sigma \in \mathbb{L}^{2,2}_{W'}$.
\end{h}

\begin{h}\label{Hyp4}
There exists  $H\in (0,1)$ and constants $c_1,c_2>0$ such  that for $0  \leq s \leq r \leq u \leq T \leq 1$ a.e.
\begin{equation} \label{d1}
\{\E (\vert D_r^{W'}\sigma_u \vert^2)\}^{1/2}\leq c_1 (u-r)^{H-\frac{1}{2}}
\end{equation}
and
\begin{equation} \label{d2}
\{\E (\vert D_s^{W'} D_r^{W'}\sigma_u \vert^2)\}^{1/2}\leq c_2 (u-r)^{H-\frac{1}{2}} (u-s)^{H-\frac{1}{2}}.
\end{equation}
\end{h}

\begin{remark}
These hypotheses have been taken for the sake of simplicity and could be replaced by more general integrability conditions as is done for instance in \cite{euch}. The arguments in this paper are general (in the sense that they are not model-specific) and can be technically adapted to deal with different volatility models with a simple truncation argument (see Section 5).
\end{remark}

We denote by  $(V^E_t)_{t \in [0,T]}$ the value of a European call option with fixed strike $k$. In particular, $$V^E_T=\E(S_T-k)_{+}.$$  In a similar way, we denote by $(V^A_t)_{t \in [0,T]}$ the price of an arithmetic Asian call option with fixed strike $k$. Then
$$V^A_T=\E(A_T-k)_+,$$ where $A_T=\frac{1}{T}\int_0^T S_t dt$.

In oder to to deal with Asian options we follow the same approach as in \cite{Alos2018}, that is, we consider the  martingale $M_t=\E_t(A_T)$.  Applying the stochastic Fubini's theorem we get that
\begin{align*}
\begin{split}
  A_T=\frac{1}{T}\int_0^T S_t dt &= \frac{1}{T}\int_0^T \left(S_0+\int_0^t \sigma_u dW_u\right) dt = \\
   &= S_0+\frac{1}{T}\int_0^T\sigma_u \left(\int_u^T  dt\right) dW_u.
\end{split}
\end{align*}
This implies that 
\begin{equation}\label{dACE}
dM_t=\frac{\sigma_t (T-t)}{T} dW_t = \phi_t dW_t,
\end{equation} where $$\phi_t := \frac{\sigma_t (T-t)}{T}.$$

Notice that, under the classical Bachelier model, $A_T$ is a Gaussian random variable with mean $S_0$ and variance $\frac{\sigma^2T}{3}$. 

We denote by \(B_E(t,S_t,k,\sigma)\)  the classical Bachelier price of a European call option with time to maturity \(T-t\), current stock price \(S_t\),
strike price \(k\) and constant volatility \(\sigma\). That is, $$
  B_E(t,S_t,k,\sigma)=(S_t-k)N(d_E(k,\sigma))+ n(d_E(k,\sigma))\sigma\sqrt{T-t},$$
where
$$
  d_E(k,\sigma)=\frac{S_t-k}{\sigma\sqrt{T-t}}.$$ Here \(N\) and \(n\) denote the cumulative distribution
function and the probability density  function of a standard normal random variable, respectively. Additionally, we recall that the Bachelier price satisfies the
following PDE \begin{align}
  \partial_t B_E(t,x,k,\sigma) + \frac{1}{2}\sigma^2 \partial^2_{xx} B_E(t,x,k,\sigma) = 0.
\label{B_Epde}
\end{align}

Similarly, we denote by \(B_A(t,S_t,y_t,k,\sigma)\)  the Bachelier price of an arithmetic Asian call option with constant volatility \(\sigma\), where $y_t=\int_0^tS_udu$. A direct computation shows  that (see the Appendix   for the details)
$$
B_A(t,S_t,y_t,k,\sigma)=\left(S_t\frac{T-t}{T}+\frac{y_t}{T}-k\right)N(d_A(k,\sigma))+\left(\frac{\sigma(T-t)\sqrt{T-t}}{T\sqrt{3}}\right)n(d_A(k,\sigma)),
$$
where
$$
d_A(k,\sigma)=\frac{S_t\frac{T-t}{T}+\frac{y_t}{T}-k}{\left(\frac{\sigma(T-t)\sqrt{T-t}}{T\sqrt{3}}\right)}.
$$
 
Notice that we have the relation
$$
 B_A(t,S_t,y_t,k,\sigma)= B_E\left(t,M_t,k,\frac{\sigma(T-t)}{T\sqrt{3}}\right).
$$

We next define the implied volatility (IV)  of a European call option as the quantity $I_E(t,k)$ such that $$V_t^E=B_E(t,S_t,k,I_E(t,k)),$$ and we denote by \(I_E(t,k^{*}_t)\), where $k^*_t=S_t$,  the corresponding at-the-money implied volatility (ATMIV) which, in the case of zero interest rates, takes the form
\(B_E^{-1}(t,S_t,S_t,V_t^E)\). Similarly, we define the implied volatility of an Asian call option $I_A(t,k)$ as the quantity such that
$$V^A_t=B_A(t,S_t, y_t, k,I_A(t,k)),$$ and we denote by \(I_A(t,k_t^{*})\) the corresponding ATMIV which, in the case of zero interest rates, takes the form 
\(B_A^{-1}(t,S_t,y_t,S_t\frac{T-t}{T}+\frac{y_t}{T},V^A_t)\). We set $k^{\ast}=k^{\ast}_0$.

The aim of this paper is to apply  the Malliavin calculus techniques developed in \cite{Alos2006} and \cite{Alos2018} in order to obtain formulas for  the ATMIV level and skew as $T \rightarrow 0$ under the general stochastic volatility model (\ref{B_Epm}).

The main result of this paper is the following theorem.
\begin{theorem}  \label{limskew}
Assume Hypotheses \ref{Hyp1}-\ref{Hyp4}. Then, 
\begin{align} \label{main1}
\lim_{T\to 0}I_E(0,k^*)=\sigma_0 \quad \text{and} \quad \lim_{T\to 0}I_A(0,k^*)=\sigma_0.
\end{align}
Moreover,
\begin{equation} \label{main2}\begin{split}
&\lim_{T \to 0} T^{\max(\frac12-H, 0)}\partial_kI_E(0,k^{*}) \\
&=\lim_{T \to 0} T^{\max(\frac12-H, 0)}\frac{\rho}{ \sigma_0 T^2}\int_0^T \int_r^T\E(D_r^{W'}\sigma_u) du  dr\\
 \end{split}
\end{equation}
\text{and} 
\begin{equation} \label{main22}\begin{split}
&\lim_{T \to 0} T^{\max(\frac12-H, 0)}\partial_kI_A(0,k^{*})\\
&=\lim_{T \to 0} T^{\max(\frac12-H, 0)}\frac{9 \rho}{ \sigma_0 T^5}\int_0^T (T-r)\int_r^T(T-u)^2\E(D_r^{W'}\sigma_u) du  dr,
 \end{split}
\end{equation}
provided that all limits exist.
\end{theorem}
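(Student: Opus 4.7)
The plan is to apply the Malliavin calculus decomposition formula developed in \cite{Alos2006} and \cite{Alos2018}, which rewrites the option price as the Bachelier price at an averaged volatility plus a correction proportional to $\rho$ and the Malliavin derivative $D^{W'}\sigma$, and then to invert the implied-volatility identity so as to convert asymptotics on $V_0$ into asymptotics on $I(0,\cdot)$. For the European case, the key is to introduce the future averaged volatility
$$v_t^2 = \frac{1}{T-t}\,\E_t\!\int_t^T \sigma_s^2\, ds,$$
apply the anticipating It\^o formula to $B_E(t, S_t, k, v_t)$ on $[0,T]$, and use the Bachelier PDE \eqref{B_Epde} together with the martingale property of $S$ to obtain a decomposition
$$V_0^E = \E\bigl[B_E(0, S_0, k, v_0)\bigr] + \rho\,\mathcal{R}_E(T, k),$$
where $\mathcal{R}_E$ is an iterated time integral of certain Bachelier cross-derivatives weighted by $\sigma$ and by $D^{W'}\sigma$. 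An entirely analogous decomposition for $V_0^A$ arises by replacing $S$ with $M$ (which satisfies $dM_t=\phi_t dW_t$ with $\phi_t=\sigma_t(T-t)/T$) and exploiting the identity $B_A(t,S_t,y_t,k,\sigma)=B_E(t,M_t,k,\sigma(T-t)/(T\sqrt 3))$ recalled just above the theorem. The derivations and attendant estimates are what Sections 3 and 4 carry out.

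The ATM level \eqref{main1} would then follow from three observations: $v_0 \to \sigma_0$ a.s.\ as $T\to 0$ by continuity of $\sigma$ (Hypothesis \ref{Hyp1}); the ATM Bachelier price equals $\sigma\sqrt{T/(2\pi)}$, so $\E[B_E(0, S_0, S_0, v_0)]/\sqrt T \to \sigma_0/\sqrt{2\pi}$ by dominated convergence; and $\mathcal{R}_E(T, k^*)=o(\sqrt T)$ under Hypothesis \ref{Hyp4}. Inverting $V_0^E = B_E(0, S_0, S_0, I_E(0, S_0))$ and using linearity of the ATM Bachelier price in $\sigma$ yields $I_E(0, k^*)\to \sigma_0$, and the Asian case follows from the same argument applied to $M$ and $\phi$.

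For the skew \eqref{main2}, implicit differentiation of $V_0^E = B_E(0, S_0, k, I_E(0,k))$ in $k$ gives
$$\partial_k I_E(0, k) = \frac{\partial_k V_0^E - \partial_k B_E(0, S_0, k, I_E)}{\partial_\sigma B_E(0, S_0, k, I_E)}.$$
Evaluating at $k=k^*=S_0$ and using $I_E\to \sigma_0$, the denominator is equivalent to $\sqrt{T/(2\pi)}$, while the dominant contribution to the numerator comes from $\rho\,\partial_k\mathcal{R}_E(T,k^*)$, the terms from $\E[B_E(0,S_0,k,v_0)]$ canceling $\partial_k B_E(0,S_0,k,I_E)$ up to a lower-order remainder. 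Carefully tracking the Gaussian factors at the ATM point and using the explicit form of $\mathcal{R}_E$ produces the single Malliavin integral of \eqref{main2} as the leading term, which after division by $\partial_\sigma B_E$ and renormalization by $T^{\max(1/2-H,0)}$ gives the claimed limit. The Asian formula \eqref{main22} is obtained identically, with $\sigma$ replaced by $\phi$ in $M$ and the Bachelier volatility rescaled by $(T-t)/(T\sqrt 3)$; the chain rule gives $D_r^{W'}\phi_u=(T-u)/T\cdot D_r^{W'}\sigma_u$, the factor $\phi_r$ contributes $(T-r)/T$, and the extra $1/\sqrt 3$ in the Asian volatility normalization combines to produce the prefactor $9/T^5$ and the polynomial weights $(T-r)(T-u)^2$ in \eqref{main22}.

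The principal obstacle is the quantitative control of the remainder terms uniformly in $T$, particularly in the rough regime $H<1/2$ where the bound $\{\E|D_r^{W'}\sigma_u|^2\}^{1/2}\le c_1(u-r)^{H-1/2}$ of Hypothesis \ref{Hyp4} blows up as $u-r\to 0$. Combining these sharp estimates with the ATM growth of the Bachelier greeks $\partial^n B_E$ (which scale as $T^{-(n-1)/2}$) one must show that every second-order correction in the decomposition is $o(T^{H+3/2})$, so that only the single-integral term containing $\E[D_r^{W'}\sigma_u]$ survives in the limit. The renormalization $T^{\max(1/2-H,0)}$ is precisely what balances the $O(T^{H-1/2})$ behavior of this leading Malliavin integral against the $O(\sqrt T)$ from $\partial_\sigma B_E$, producing a finite non-trivial limit.
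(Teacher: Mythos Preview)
Your overall architecture matches the paper: you use the anticipating It\^o decomposition of the price (the paper's Theorem~\ref{decomposition}) and then implicit differentiation of the identity $V_0=B(0,\cdot,k,I(0,k))$ to isolate the skew (the paper's Proposition~\ref{skew}). Your derivation of the level \eqref{main1} is essentially the paper's argument, exploiting that the ATM Bachelier price is linear in $\sigma$ so that $B_E^{-1}$ is affine at $k^*$ and $\E[B_E(0,S_0,k^*,v_0)]$ inverts directly to $\E[v_0]\to\sigma_0$, with the $\rho$-correction controlled by Hypothesis~\ref{Hyp4}. One small slip: your $v_t^2=\frac{1}{T-t}\E_t\int_t^T\sigma_s^2\,ds$ should \emph{not} carry the $\E_t$; the paper's $v'_t=\sqrt{\frac{1}{T-t}\int_t^T\sigma_s^2\,ds}$ is anticipating, and this is precisely why the anticipating It\^o formula (Theorem~\ref{aito}) is needed.

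For the skew, however, there is a genuine gap. After Proposition~\ref{skew} the numerator reads
\[
\E\!\left[\int_0^T G(s,S_s,k^*,v'_s)\,\Lambda'_s\,ds\right],\qquad G=\partial_k H=\tfrac12\partial_k\partial_{xxx}^3 B_E,
\]
and your phrase ``carefully tracking the Gaussian factors at the ATM point'' does not explain how to extract the leading contribution from this integral. The difficulty is that $G(s,S_s,k^*,v'_s)$ is a random process whose successive $x$-derivatives blow up like negative powers of $\int_s^T\sigma_r^2\,dr$ (Lemma~\ref{bound}); a naive freezing $S_s\approx S_0$, $v'_s\approx v'_0$ is not justified uniformly in $s\in[0,T]$. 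The paper's remedy is a \emph{second} application of the anticipating It\^o formula (Proposition~\ref{LeadOrder}), applied to $G(s,M_s,k,v_s)J_s$ with $J_s=\int_s^T\Lambda_r\,dr$, which produces the exact splitting
\[
\E\!\left[\int_0^T G\,\Lambda_s\,ds\right]=\E\!\left[G(0,M_0,k^*,v_0)\,J_0\right]+\tfrac12\E\!\left[\int_0^T\partial_{xxx}^3 G\cdot J_s\Lambda_s\,ds\right]+\E\!\left[\int_0^T\partial_x G\cdot\phi_s D^-J_s\,ds\right].
\]
Only after this step do Lemma~\ref{bound} and Hypothesis~\ref{Hyp4} yield that the last two integrals are $O(T^{2H})$ (hence negligible after division by $\partial_\sigma B_A\sim\sqrt{T}$ and multiplication by $T^{\max(1/2-H,0)}$). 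The surviving term $\E[G(0,M_0,k^*,v_0)J_0]$ can then be evaluated explicitly because $G(0,M_0,k^*,v_0)=\tfrac{1}{2\sqrt{2\pi}}\,v_0^{-3}T^{-3/2}$, and a final continuity argument replaces $v_0^{-3}$ by $3\sqrt{3}\sigma_0^{-3}$ and $\sigma_r,\sigma_u$ by $\sigma_0$ to produce \eqref{main22}. Your proposal does not contain this second It\^o step, nor any equivalent mechanism to separate the leading order from the remainder; without it the control of the ``second-order corrections'' you allude to cannot be carried out.
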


Observe that when prices and volatilities are uncorrelated  then the short-time skew equals to zero, which  coincides with the constant volatility  case. Notice also that if the term $\E(D_r^{W'}\sigma_u)$ is of order $(u-r)^{H-\frac12}$, the quantity multiplying the term 
$T^{\max(\frac12-H, 0)}$ of the right hand side of (\ref{main2}) and (\ref{main22}) is bounded by $cT^{H-\frac12}$. In particular, this limit is $0$ if $H>1/2$. This suggest that, in the case $H<1/2$,  we need to multiply by $T^{\frac12-H}$ in order to obtain a finite limit.
See the examples in Section 5.

The results of Theorem  \ref{limskew} can be used in order to derive approximation formulas for the price of European and Asian call options. Notice that
\begin{equation*} 
V^E_0= B_E(0,S_0,k,I_E(0,k)) \quad \text{and} \quad V^A_0= B_A(0,S_0,y_0,k,I_A(0,k)).
\end{equation*}
Then, using Taylor's formula we can use the approximations
\begin{equation} \label{proxyPrice}
\begin{split}
I_E(0,k) &\approx I_E(0,k^{*}) + \partial_kI_E(0,k^{*})(k-k^{*}),\\
I_A(0,k) &\approx I_A(0,k^{*}) + \partial_kI_A(0,k^{*})(k-k^{*}).
\end{split}
\end{equation}
The great utility of these relations is  that we can use them to approximate the price of European and Asian call options for a wide range of stochastic and fractional volatility models. We numerically compare this approximation with the classical Monte Carlo approximation for the SABR and fractional Bergomi models in Section \ref{proxyNumerics}. 

In conclusion, Theorem 1 presents a a short maturity asymptotic formula for both the level and skew of the ATMIV of both European and Asian call options under the Bachelier model of stock prices and general stochastic volatility models. This formulas only depend on the Malliavin derivative of the underlying volatility model and applies to a large class of stochastic, fractional and local volatility models. 
As an application, we compare this formula with the  linear approximation (\ref{proxyPrice}) and the classical Monte Carlo. Of course, would expect to obtain better results if one has a short maturity asymptotic formula for the curvature $\partial_{kk}^2I(0,k^{*})$. The short-time maturity asymptotics for the at-the-money curvature of the implied volatility for European calls under general stochastic volatility models is computed in \cite{Alos2017}. See also \cite{euch} for asymptotic expansions of put option prices and at-the-money implied volatility 
under general volatility models using small-time Edgeworth expansion of the density of an asset price. A Taylor expansion for short maturity asymptotics for Asian options when the underlying asset follows a local volatility model is obtained in \cite{Pirjol2016}.
In our setting, computing the curvature using the Malliavin calculus technique following \cite{Alos2017} would be the next step and we leave it for further work. Another possible extension of this paper would be to add jumps in the underlying process as in \cite{Alos2007a}.

\section{Preliminary results: decomposition formulas}

In this section we provide closed form decomposition formulas for the prices and for the ATM implied volatility skew of European and Asian call options under the stochastic volatility model (\ref{B_Epm}) that will  be crucial for the proof of the main results.
 
We start with  a preliminary lemma.
\begin{lemma} 
Consider the model \eqref{B_Epm}. Let $0\leq t \leq s < T$, and $\mathcal G_t:= \mathcal{F}_t \lor \mathcal{F}^{W'}_T$. Then for every $n \geq 0$, there exists $C=C(n,\rho)$ such that  
\begin{equation*}
\begin{split}
\left |\mathbb{E}\left( \frac{\partial^n}{\partial x^n}\left(\frac{\partial^2}{\partial x^2}B_E(s,M_s,k,v_{s}) \right) |  \mathcal G_t \right)\right| & = 
\left |\mathbb{E}\left( \frac{\partial^n}{\partial x^n}\left(\frac{\partial^2}{\partial x \partial k}B_E(s,M_s,k,v_{s}) \right)|  \mathcal G_t \right)\right|  \\
& \leq C \left ( \int_{t}^{T} \phi_r^2 dr \right)^{-\frac{n+1}{2}},
\end{split}
\end{equation*}
where $v_t=\sqrt{\frac{1}{T-t}\int_t^T \phi_r^2 dr}$.
\label{bound}
\end{lemma}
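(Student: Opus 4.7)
The plan is to first reduce the two quantities on the left-hand side to a single one, and then explicitly evaluate the conditional expectation by exploiting that $\partial_x^2 B_E$ is itself a Gaussian density in $x$. Direct differentiation of the Bachelier formula gives $\partial_x B_E(t,x,k,\sigma)=N(d_E)$ and $\partial_k B_E(t,x,k,\sigma)=-N(d_E)$, so $\partial^2_{xk}B_E=-\partial^2_{x}B_E$ and it suffices to bound the $\partial_x^{n+2}B_E$ term. Moreover one computes
$$\partial^2_x B_E(t,x,k,\sigma)=\frac{1}{\sigma\sqrt{T-t}}\,n\!\left(\frac{x-k}{\sigma\sqrt{T-t}}\right),$$
which, evaluated at $(s,x,k,v_s)$, becomes the Gaussian density $f(x):=\frac{1}{\sigma^{*}_s}\,n\!\left(\frac{x-k}{\sigma^{*}_s}\right)$ centred at $k$ with standard deviation $\sigma^{*}_s:=v_s\sqrt{T-s}=\bigl(\int_s^T \phi_r^2\,dr\bigr)^{1/2}$, which is $\mathcal{F}^{W'}_T$-measurable and hence $\mathcal{G}_t$-measurable.

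Next I would identify the conditional law of $M_s$ given $\mathcal{G}_t$. Substituting $W=\rho W'+\sqrt{1-\rho^2}\,B$ into \eqref{dACE} yields
$$M_s = M_t + \rho\int_t^s \phi_r\,dW'_r + \sqrt{1-\rho^2}\int_t^s \phi_r\,dB_r.$$
Since $\sigma$ and therefore $\phi$ is adapted to $W'$, and since $B$ is independent of $W'$ and of $\mathcal{F}_t$, conditional on $\mathcal{G}_t$ the integrand of the $B$-integral is deterministic while the increments $\{B_r-B_t\}_{r>t}$ are Gaussian with independent increments. Hence $M_s\mid \mathcal{G}_t \sim \mathcal{N}(\mu_s,\Sigma_s^2)$ with $\mathcal{G}_t$-measurable mean $\mu_s=M_t+\rho\int_t^s\phi_r\,dW'_r$ and variance $\Sigma_s^2=(1-\rho^2)\int_t^s \phi_r^2\,dr$, so $\sigma^{*}_s$, $\mu_s$, $\Sigma_s$ all act as constants inside the conditional expectation.

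Having done this, the conditional expectation $\mathbb{E}[\partial_x^n f(M_s)\mid \mathcal{G}_t]$ equals the convolution of $\partial_x^n f$ with the $\mathcal{N}(0,\Sigma_s^2)$ density, evaluated at $\mu_s$. Pulling the $n$ derivatives out and using the fact that the convolution of two Gaussian densities is again Gaussian with added variance, I obtain
$$\mathbb{E}\!\left[\partial_x^n f(M_s)\mid \mathcal{G}_t\right]= \frac{(-1)^n}{V^{n+1}}\,H_n\!\left(\tfrac{\mu_s-k}{V}\right)\,n\!\left(\tfrac{\mu_s-k}{V}\right),$$
where $V^2:=(\sigma^{*}_s)^2+\Sigma_s^2$ and $H_n$ denotes the $n$-th (probabilists') Hermite polynomial.

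The bound then follows from two elementary observations: $z\mapsto H_n(z)n(z)$ is bounded uniformly on $\mathbb{R}$ by a constant $C_n$ depending only on $n$, and
$$V^2 = \int_s^T \phi_r^2\,dr + (1-\rho^2)\int_t^s \phi_r^2\,dr \;\geq\; (1-\rho^2)\int_t^T \phi_r^2\,dr,$$
which yields the claimed inequality with $C(n,\rho)=C_n(1-\rho^2)^{-(n+1)/2}$. The only genuinely delicate step is the measurability/independence argument that identifies the conditional Gaussian law of $M_s$ given $\mathcal{G}_t$; once that is in place, everything reduces to differentiating the explicit Bachelier price and performing a Gaussian convolution in closed form.
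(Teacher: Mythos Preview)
Your proof is correct and follows precisely the standard argument that the paper invokes by citing Lemma~6.3.1 in \cite{Alos2021c}: reduce to the Gaussian kernel $\partial_x^2 B_E$, identify the conditional law of $M_s$ given $\mathcal{G}_t$ as Gaussian (via the decomposition $W=\rho W'+\sqrt{1-\rho^2}\,B$ and the $\mathcal{F}^{W'}_T$-measurability of $\phi$), convolve the two Gaussians, and bound $H_n(z)n(z)$ uniformly together with $V^2\geq (1-\rho^2)\int_t^T\phi_r^2\,dr$. One minor wording slip: $B$ is not independent of $\mathcal{F}_t$ as a process---what you actually use, and state correctly in the next clause, is that the increments $(B_r-B_t)_{r>t}$ are independent of $\mathcal{F}^{W'}_T\vee\mathcal{F}^B_t\supset\mathcal{G}_t$.
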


\begin{proof} The proof follows the same steps as the proof of Lemma 6.3.1 in \cite{Alos2021c}. \end{proof}

The main result of this section is the following theorem.
\begin{theorem} \label{decomposition}
Assume Hypotheses \ref{Hyp1}-\ref{Hyp3}. Then, the following relations hold for all $t \in [0,T ]$,
\begin{equation*}
\begin{split}
V^E_t= \mathbb{E}_t\left(B_E(t,S_t,k,v'_t)\right)+\mathbb{E}_t\left(\int_t^T H(s,S_s,k,v'_s)\sigma_s \left(\int_s^T D_s^{W}\sigma_r^2dr\right) ds\right),\\
V^A_t= \mathbb{E}_t\left(B_E(t,M_t,k,v_t)\right)+\mathbb{E}_t\left(\int_t^T H(s,M_s,k,v_s)\phi_s \left(\int_s^T D_s^{W}\phi_r^2dr\right) ds\right),
\end{split}
\end{equation*}
where $H(s,x,k,\sigma)=\frac{1}{2}\partial_{xxx}^3B_E(s,x,k,\sigma)$, $v_t'=\sqrt{\frac{1}{T-t}\int_t^T \sigma_s^2ds}$, and $v_t=\sqrt{\frac{1}{T-t}\int_t^T \phi_s^2ds}$. 
\end{theorem}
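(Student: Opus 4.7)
The plan is to recast the Bachelier PDE~\eqref{B_Epde} as a heat equation in the \emph{variance} variable and then apply the anticipating It\^o formula to the pricing function evaluated with the \emph{future average variance} as its volatility input. Concretely, writing $B_E(s,x,k,\sigma) = G(x - k,\,\sigma^2 (T-s))$ reduces~\eqref{B_Epde} to $\partial_\xi G = \tfrac{1}{2}\partial_{xx}^2 G$ with terminal data $G(x-k,0) = (x-k)_+$, so that $G$ plays the role of a classical heat kernel convolution against the payoff.

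For the European formula I would introduce the anticipating process $Y_s := \int_s^T \sigma_r^2\,dr$, so that $B_E(s,S_s,k,v_s') = G(S_s-k,\,Y_s)$ and $G(S_T - k, Y_T)=(S_T - k)_+$ matches the terminal payoff. Because $\sigma$ is $W'$-adapted while $S$ is driven by $W = \rho W'+\sqrt{1-\rho^2}\,B$, the process $Y_s$ anticipates the filtration generated by $W$, so the anticipating It\^o formula of Al\`os--Nualart is the right tool. Its application to $G(S_s-k,Y_s)$ between $t$ and $T$ produces four terms: the It\^o integral against $dS_s=\sigma_s dW_s$; the drift $-\int_t^T \partial_\xi G\,\sigma_s^2\,ds$ coming from $dY_s=-\sigma_s^2\,ds$; the quadratic-variation contribution $\tfrac{1}{2}\int_t^T \partial_{xx}^2 G\,\sigma_s^2\,ds$; and the anticipating cross term $\int_t^T \partial_{x\xi}^2 G\,\sigma_s\,D_s^W Y_s\,ds$. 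The heat equation forces the two middle terms to cancel exactly, taking $\mathbb{E}_t$ eliminates the It\^o integral, and the identities $\partial_{x\xi}^2 G = \tfrac{1}{2}\partial_{xxx}^3 G = H$ together with $D_s^W Y_s = \int_s^T D_s^W \sigma_r^2\,dr$ reassemble the first claimed formula.

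The Asian identity should follow by rerunning the very same calculation after the substitutions $S_s\mapsto M_s$, $\sigma_s\mapsto \phi_s$, $Y_s\mapsto Y_s^A := \int_s^T \phi_r^2\,dr$. The dynamics $dM_s = \phi_s\,dW_s$ from~\eqref{dACE}, together with the martingale identity $V_T^A = (M_T-k)_+$, are precisely what is needed to carry the argument verbatim.

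The principal technical obstacle is justifying the anticipating It\^o step all the way up to maturity, since the $x$-derivatives of $G$ become singular as $\xi\downarrow 0$. In the European case Hypothesis~\ref{Hyp1} gives $Y_s\ge c_1^2(T-s)$, so these derivatives blow up only like inverse powers of $T-s$; in the Asian case $\phi$ itself degenerates at maturity and the singularity is genuine, but Lemma~\ref{bound} supplies exactly the conditional-expectation bound on $\partial_{xxx}^3 B_E$ needed to control the time integral. Combined with the Malliavin regularity $\sigma\in\mathbb{L}^{2,2}_{W'}$ of Hypothesis~\ref{Hyp3}, which guarantees that $D^W Y_s$ and $D^W Y_s^A$ exist in $L^2$, this lets one establish the identity on $[t,T-\varepsilon]$ and then take $\varepsilon\downarrow 0$ by dominated convergence.
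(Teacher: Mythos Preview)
Your proposal is correct and follows essentially the same approach as the paper's proof: both apply the anticipating It\^o formula to the Bachelier price evaluated at the forward integrated variance $Y_s=\int_s^T\phi_r^2\,dr$ (resp.\ $\int_s^T\sigma_r^2\,dr$), use the Bachelier PDE to cancel the drift and quadratic-variation terms, take conditional expectations, and invoke Lemma~\ref{bound} together with a truncation/approximation argument to handle the singularity at maturity. The only presentational difference is that you first reduce to the two-variable heat kernel $G(x-k,\xi)$, whereas the paper keeps $B_E(s,x,k,v_s)$ and uses the identity $\partial_{xx}^2 B_E=\partial_\sigma B_E/(\sigma(T-s))$ to reach the same cancellations.
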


\begin{proof} 
The proof follows similar ideas as the proof of Theorem 25 in \cite{Alos2006}. See also Theorem 4.2 in \cite{Alos2007a} and Theorem 6.3.2 in \cite{Alos2021c}.
Notice that, as $B_E(T,x,k,\sigma)=(x-k)_+$ for every $\sigma>0$,  the prices of our European and Asian call options can be written as
\begin{equation*}  
V_t^E =\mathbb{E}_t(B_E(T,S_T,k,v'_T)) \quad \text{and} \quad
V_t^A=\mathbb{E}_t(B_E(T,M_T,k,v_T)),
\end{equation*}
where $v'_T=\lim_{t \rightarrow T} v'_t=\sigma_T$, and $v_T=\lim_{t \rightarrow T} v_t=0$.

We provide the proof of the decomposition formula for $V_t^A$. A similar argument applies for $V_t^E$ and we safely skip it. Applying Theorem \ref{aito} to the function \(B_E(t,M_t,k,v_t)\) and $Y_t=\int_t^T \phi^2_s ds$ noticing that $v_t=\sqrt{\frac{Y_t}{T-t}}$, we obtain
\begin{align*}
&B_E(T,M_T,k,v_T) = B_E(t,M_t,k,v_t)\\
&\qquad +\int_t^T \left(\partial_s B_E(s,M_s,k,v_s)  + \partial_{\sigma}B_E(s,M_s,k,v_s)\frac{v_s^2}{2(T-s)v_s}\right)ds\\
  &\qquad + \int_t^T \partial_x B_E(s,M_s,k,v_s) \phi_sdW_s - \int_t^T \partial_{\sigma}B_E(s,M_s,k,v_s)\frac{\phi_s^2}{2(T-s)v_s}ds  \\
  &\qquad + \int_t^T \partial^2_{\sigma x}B_E(s,M_s,k,v_s)\frac{\phi_s}{2(T-s)v_s}\left(\int_s^T D_s^{W}\phi_r^2dr\right)ds\\
  &\qquad +\frac{1}{2}\int_t^T\partial_{xx}^2 B_E(s,M_s,k,v_s) \phi_s^2 ds.
\end{align*}
Notice that the following relation holds
\begin{align}
\partial_{xx}^2B_E(s,M_s,k,\sigma)=\frac{\partial_{\sigma}B_E(s,M_s,k,\sigma))}{\sigma(T-s)}. 
\label{dvgrel}
\end{align}
Then we get the following
\begin{align*}
&B_E(T,M_T,k,v_T) = B_E(t,M_t,k,v_t)\\
&\qquad+\int_t^T \left(\partial_s B_E(s,M_s,k,v_s)+\frac{1}{2}v_s^2 \partial_{xx}^2B_E(s,M_s,k,v_s) \right)ds\\
  &\qquad+ \int_t^T \partial_x B_E(s,M_s,k,v_s)\phi_sdW_s- \frac{1}{2}\int_t^T \partial_{xx}^2B_E(s,M_s,k,v_s)\phi_s^2ds  \\
  &\qquad+ \frac{1}{2}\int_t^T \partial^3_{xxx}B_E(s,M_s,k,v_s)\phi_s
  \left(\int_s^TD_s^{W}\phi_r^2dr\right) ds \\
  &\qquad+\frac{1}{2}\int_t^T \partial_{xx}^2B_E(s,M_s,k,v_s)\phi_s^2ds.
\end{align*}
The first integral in the above expression is equal to zero due to
equation \eqref{B_Epde}. Finally, taking conditional expectations we conclude that \begin{align*}
&\mathbb{E}_t\left(B_E(T,M_T,k,v_T)\right) = \mathbb{E}_t\left(B_E(t,M_t,k,v_t)\right)\\
  &\qquad+ \mathbb{E}_t\left(\frac{1}{2}\int_t^T \partial_{xxx}^3B_E(s,M_s,k,v_s)\phi_s\left(\int_s^T D_s^{W}\phi_r^2dr \right)ds\right).
\end{align*} 
Observe that by Lemma \ref{bound} and Hypotheses \ref{Hyp1} and \ref{Hyp3}, the last conditional expectation is finite. We also observe that since the function $B_E$ and its derivatives are not bounded, exactly the same truncation argument of Theorem 4.2 in \cite{Alos2007a} can be used here in order to apply Theorem \ref{aito}.
This completes the desired proof.\end{proof}

Based on the result of Theorem \ref{decomposition}, we derive an expression for the ATMIV skew of European and Asian call options under the stochastic volatility model (\ref{B_Epm}).
\begin{proposition}  
Assume Hypotheses \ref{Hyp1}-\ref{Hyp3}. Then, for every $t \in [0,T ]$ the following holds  
\begin{align*}
\begin{split}
\partial_kI_E(t,k_t^*) = \frac{\mathbb{E}_t\left(\int_t^T \partial_k H(s,S_s,k_t^*,v_s')\Lambda_s' ds\right)}{\partial_{\sigma}B_E(t,S_t,k_t^*,I_E(t,k_t^*))},\\
\partial_kI_A(t,k^*_t) = \frac{\mathbb{E}_t\left(\int_t^T \partial_k H(s,M_s,k_t^*,v_s)\Lambda_s ds\right)}{\partial_{\sigma}B_A(t,S_t,y_t,k_t^*,I_A(t,k_t^*))},
\end{split}
\end{align*}
where $\Lambda_s'=\sigma_s\int_s^T D_s^{W}\sigma_r^2dr$ and $\Lambda_s=\phi_s\int_s^T D_s^{W}\phi_r^2dr$.
\label{skew}
\end{proposition}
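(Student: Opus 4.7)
The plan is to differentiate the implicit definition of the implied volatility with respect to $k$, substitute the decomposition formula of Theorem \ref{decomposition} for $V^E_t$ (and $V^A_t$), and observe that at-the-money the pure $\partial_k$ contributions from the leading Bachelier terms cancel, leaving only the Malliavin correction divided by the vega.

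More concretely, for the European case I would start from the identity $V^E_t=B_E(t,S_t,k,I_E(t,k))$. Differentiating in $k$ gives
\begin{equation*}
\partial_k V^E_t = \partial_k B_E(t,S_t,k,I_E(t,k)) + \partial_\sigma B_E(t,S_t,k,I_E(t,k))\,\partial_k I_E(t,k).
\end{equation*}
On the other hand, differentiating the decomposition formula of Theorem \ref{decomposition} under the conditional expectation (justified by Hypotheses \ref{Hyp1}--\ref{Hyp3} together with the $L^p$-bounds of Lemma \ref{bound}, which control the relevant Bachelier derivatives uniformly) yields
\begin{equation*}
\partial_k V^E_t = \mathbb{E}_t\!\left(\partial_k B_E(t,S_t,k,v'_t)\right) + \mathbb{E}_t\!\left(\int_t^T \partial_k H(s,S_s,k,v'_s)\,\Lambda'_s\, ds\right).
\end{equation*}

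Now I would evaluate both expressions at $k=k^*_t=S_t$. For the Bachelier model $\partial_k B_E(t,x,k,\sigma)=-N(d_E(k,\sigma))$, and at $k=S_t$ one has $d_E=0$, so that $\partial_k B_E(t,S_t,k^*_t,\sigma)=-\tfrac12$ \emph{independently} of $\sigma$. Consequently
\begin{equation*}
\partial_k B_E(t,S_t,k^*_t,I_E(t,k^*_t)) \;=\; -\tfrac12 \;=\; \mathbb{E}_t\!\left(\partial_k B_E(t,S_t,k^*_t,v'_t)\right),
\end{equation*}
and these two terms cancel when we equate the two expressions for $\partial_k V^E_t$. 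Solving for $\partial_k I_E(t,k^*_t)$ gives the claimed formula. The Asian case is handled identically: replace $S_t$ by $M_t$, note that by the computation $M_t=S_t\tfrac{T-t}{T}+\tfrac{y_t}{T}=k^*_t$ so that again $d_A(k^*_t,\sigma)=0$ and $\partial_k B_E(t,M_t,k^*_t,\sigma)=-\tfrac12$, use the relation $B_A(t,S_t,y_t,k,\sigma)=B_E(t,M_t,k,\sigma(T-t)/(T\sqrt3))$ to translate the vega in the denominator into $\partial_\sigma B_A$, and apply the second decomposition of Theorem \ref{decomposition}.

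I do not expect any real obstacle beyond bookkeeping: the key analytic step is the justification of differentiation under the expectation and the time integral, which follows routinely from Hypothesis \ref{Hyp1} (so that $\Lambda'_s$ and $\Lambda_s$ are in $L^2$ via Hypothesis \ref{Hyp3}) together with the polynomial-type bounds on $\partial_k H=\tfrac12\partial^3_{xxk}B_E$ supplied by Lemma \ref{bound}. The real content of the proposition is the at-the-money cancellation of the $-\tfrac12$ terms, which is what converts the decomposition formula into a clean expression for the skew.
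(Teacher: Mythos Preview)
Your proposal is correct and follows essentially the same route as the paper: differentiate the implicit relation $V_t=B(t,\cdot,k,I(t,k))$, differentiate the decomposition formula of Theorem~\ref{decomposition}, equate the two, and exploit the at-the-money identity $\partial_k B_E(t,M_t,k_t^*,\sigma)=\partial_k B_A(t,S_t,y_t,k_t^*,\sigma)=-\tfrac12$ (independent of $\sigma$) to cancel the leading terms. Your explicit justification of the interchange of $\partial_k$ and $\mathbb{E}_t$ via Lemma~\ref{bound} and Hypotheses~\ref{Hyp1}--\ref{Hyp3} is slightly more detailed than the paper's argument, but the structure is identical.
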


\begin{proof} We provide the proof for the second part of the theorem. The first part follows by similar arguments. 
Since \(V_t^A=B_A(t,S_t,y_t,k,I_A(t,k))\), differentiating we obtain that
\begin{equation*}\label{st1}
\partial_kV^A_t=\partial_kB_A(t,S_t,y_t,k,I_A(t,k))+\partial_{\sigma}B_A(t,S_t,y_t,k,I_A(t,k))\partial_kI_A(t,k).
\end{equation*}
On the other hand, using Theorem \ref{decomposition}, we 
get that \begin{equation*}\label{st2}
\partial_kV^A_t=\partial_k\mathbb{E}_t\left(B_E(t,M_t,k,v_t)\right)+\mathbb{E}_t\left(\int_t^T \partial_k H(s,M_s,k,v_s)\Lambda_s ds\right).
\end{equation*} 
Combining both  equations, we obtain
that the volatility skew \(\partial_kI_A(t,k)\) is equal to \begin{equation*}
\frac{\mathbb{E}_t\left(\int_t^T \partial_k H(s,M_s,k,v_s)\Lambda_s ds\right)+\mathbb{E}_t\left(\partial_kB_E(t,M_t,k,v_t)\right)-\partial_kB_A(t,S_t,y_t,k,I_A(t,k))}{\partial_{\sigma}B_A(t,S_t,y_t,k,I_A(t,k))}.
\label{st3}
\end{equation*}
Finally, using the fact that
$$
\partial_kB_E(t,M_t,k_t^*,\sigma)=\partial_kB_A(t,S_t,y_t,k_t^*,\sigma)=-\frac{1}{2}
$$
we complete the desired proof.
\end{proof}

In order to compute the limit of ATMIV skew, we need to identify the  leading order terms in the numerator of the formulas obtained in Proposition \ref{skew}, for which the next result will be crucial.
\begin{proposition}\label{skew2}
Assume Hypotheses \ref{Hyp1}-\ref{Hyp3}. Then, for all $t \leq T$,
\begin{align*}
\begin{split}
\mathbb{E}_t\left(\int_t^T G(s,S_s,k,v'_s)\Lambda_s' ds\right)&=\mathbb{E}_t\left( G(t,S_t,k,v'_t) J_t'\right)\\
&+\mathbb{E}_t\left(\frac{1}{2}\int_t^T \partial_{xxx}^3G(s,S_s,k,v'_s) J_s'\Lambda_s'ds\right)\\
&+ \mathbb{E}_t\left(\int_t^T \partial_{x}G(s,S_s,k,v'_s)\sigma_s D^{-}J_s'ds\right),\\
\mathbb{E}_t\left(\int_t^T G(s,M_s,k,v_s)\Lambda_s ds\right)&=\mathbb{E}_t\left( G(t,M_t,k,v_t) J_t\right)\\
&+\mathbb{E}_t\left(\frac{1}{2}\int_t^T \partial_{xxx}^3G(s,M_s,k,v_s) J_s\Lambda_sds\right)\\
&+ \mathbb{E}_t\left(\int_t^T \partial_{x}G(s,M_s,k,v_s)\phi_s D^{-}J_sds\right),
\end{split}
\end{align*}
where $G(t,x,k,\sigma)=\partial_k H(t,x,k,\sigma)$, $J_t=\int_t^T \Lambda_s ds$, $J'_t=\int_t^T \Lambda'_s ds$, $D^{-}J'_s =\int_s^T D_s^W \Lambda'_r dr$ and $D^{-}J_s =\int_s^T D_s^W \Lambda_r dr$.
\label{LeadOrder}
\end{proposition}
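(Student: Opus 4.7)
The plan is to apply the anticipating It\^o formula (Theorem \ref{aito}) to the product process $s\mapsto G(s,S_s,k,v'_s)\,J'_s$ on $[t,T]$ and to exploit the terminal condition $J'_T=0$ to recover the right-hand side of the proposition. I describe the European case in detail; the Asian identity follows by the identical calculation with $(S_s,\sigma_s,\Lambda'_s,J'_s,v'_s)$ replaced by $(M_s,\phi_s,\Lambda_s,J_s,v_s)$ throughout, using that $\phi_s=\sigma_s(T-s)/T$ inherits the required regularity from $\sigma_s$.

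\textbf{Step 1: It\^o expansion for $G(s,S_s,k,v'_s)$.} Because $G=\partial_k H=\tfrac12\,\partial^4_{xxxk}B_E$ and because the coefficients of the Bachelier PDE \eqref{B_Epde} do not depend on $x$ or $k$, the function $G(s,x,k,\sigma)$ satisfies $\partial_s G+\tfrac12\sigma^2\partial^2_{xx}G=0$ for every constant $\sigma$, and likewise inherits the identity $\partial^2_{xx}G=\partial_\sigma G/(\sigma(T-s))$ from \eqref{dvgrel}. Applying Theorem \ref{aito} to $G(s,S_s,k,v'_s)$ and reproducing verbatim the cancellations carried out in the proof of Theorem \ref{decomposition} (with $(S,\sigma)$ in place of $(M,\phi)$) yields
$$dG(s,S_s,k,v'_s)=\partial_x G(s,S_s,k,v'_s)\,\sigma_s\,dW_s+\tfrac12\,\partial^3_{xxx}G(s,S_s,k,v'_s)\,\Lambda'_s\,ds.$$

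\textbf{Step 2: anticipating product rule with $J'$.} Since $J'_s=\int_s^T\Lambda'_u\,du$ satisfies $dJ'_s=-\Lambda'_s\,ds$ with no martingale component but is anticipating, the product version of Theorem \ref{aito} gives
$$d\bigl(G(s,S_s,k,v'_s)J'_s\bigr)=J'_s\,dG(s,S_s,k,v'_s)-G(s,S_s,k,v'_s)\Lambda'_s\,ds+\partial_x G(s,S_s,k,v'_s)\,\sigma_s\,D^{-}J'_s\,ds,$$
where the final term is the Malliavin trace correction produced by pairing the forward diffusion coefficient $\partial_x G\cdot\sigma_s$ of $dG$ with the anticipating factor $J'_s$; it is exactly this coefficient multiplied by $D_s^W J'_s=\int_s^T D_s^W\Lambda'_r\,dr=D^{-}J'_s$. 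Integrating on $[t,T]$, using $J'_T=0$ to discard the upper boundary, and taking $\mathbb{E}_t$ (which annihilates the Skorohod-type $dW_s$-integral) delivers, after rearrangement, the first identity of the proposition. Repeating the same argument with the Asian data gives the second identity.

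\textbf{Main obstacle.} The delicate technical point is justifying the integrability required to invoke Theorem \ref{aito} and to exchange integrals and expectations, since by Lemma \ref{bound} the derivatives $\partial_x G$ and $\partial^3_{xxx}G$ blow up as $s\uparrow T$ at a rate involving negative powers of $\int_s^T\phi_r^2\,dr$. This is handled by the same truncation used in the proof of Theorem \ref{decomposition} and in Theorem 4.2 of \cite{Alos2007a}: replace $v'_s$ by $v'_s\vee\varepsilon$, establish the identity for the truncated process, and pass to the limit $\varepsilon\downarrow 0$ by dominated convergence, for which Hypotheses \ref{Hyp1}--\ref{Hyp3} supply the uniform $L^2$-bounds on $\sigma$, $\Lambda'$ and on the Malliavin derivatives entering $D^{-}J'$.
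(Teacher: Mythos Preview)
Your proof is correct and follows essentially the same approach as the paper: apply the anticipating It\^o formula of Theorem \ref{aito} to the product $G(s,\cdot,k,v_s)J_s$, use that $G$ inherits both the Bachelier PDE \eqref{B_Epde} and the vega--gamma relation \eqref{dvgrel} to collapse the drift terms, take conditional expectations to kill the Skorohod integral, and invoke the same truncation argument as in Theorem \ref{decomposition} (and Theorem 4.2 of \cite{Alos2007a}) to justify the integrability. The only cosmetic difference is that the paper applies Theorem \ref{aito} directly to the product $G\cdot J$ in one step, whereas you first compute $dG$ and then invoke a product rule with the anticipating factor $J'$; the resulting terms and their cancellations are identical.
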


\begin{proof} We only prove the second part of the theorem since the first part follows by similar arguments. Applying Theorem \ref{aito} to the function $\partial_k H(t,M_t,k,v_t) \int_t^T \Lambda_s ds$,
we obtain that \begin{align*}
&\int_t^TG(s,M_s,k,v_s)\Lambda_sds = G(t,M_t,k,v_t)J_t \\
  &\qquad +\int_t^T \left(\partial_sG(s,M_s,k,v_s)+\frac{v_s^2}{2(T-s)v_s}\partial_vG(s,M_s,k,v_s)\right)J_sds\\
  &\qquad  + \int_t^T \partial_x G(s,M_s,k,v_s)J_s \phi_sdW_s \\
  &\qquad   - \int_t^T \partial_vG(s,M_s,k,v_s)J_s\frac{\phi_s^2}{2(T-s)v_s}ds   + \int_t^T \partial_{vx}^2G(s,M_s,k,v_s)J_s\Lambda_s\frac{1}{2(T-s)v_s}ds\\
  &\qquad  + \int_t^T \partial_{x}G(s,M_s,k,v_s)\phi_s D^{-}J_sds+\frac{1}{2}\int_t^T \phi_s^2\partial_{xx}^2 G(s,M_s,k,v_s)J_sds.
\end{align*}
Next, equations \eqref{B_Epde} and (\ref{dvgrel}) imply that
\begin{align*}
&\partial_s G(s,M_s,k,v_s) + \frac{1}{2}v_s^2 \partial^2_{xx} G(s,M_s,k,v_s)=0,\\
&\partial_{xx}^2G(s,M_s,k,v_s)=\frac{\partial_{v}G(s,M_s,k,v_s)}{v_s(T-s)}.
\end{align*}
Using the above equations we get that \begin{align*}
\begin{split}
&\int_t^TG(s,M_s,k,v_s)\Lambda_sds = G(t,M_t,k,v_t)J_t +\int_t^T \partial_x G(s,M_s,k,v_s)J_s\phi_sdW_s\\
  &\qquad + \frac{1}{2}\int_t^T \partial_{xxx}^3G(s,M_s,k,v_s)J_s\Lambda_sds+ \int_t^T \partial_{x}G(s,M_s,k,v_s)\phi_s D^{-}J_sds.
  \end{split}
\end{align*}
Finally, taking conditional expectations and noticing that by Lemma \ref{bound} and Hypotheses \ref{Hyp1} and \ref{Hyp4}, all conditional expectations are finite, we complete the desired proof. Remark that as for Theorem \ref{decomposition}, the same truncation argument as in \cite{Alos2007a} can be used in order to apply Theorem \ref{aito}.
\end{proof}

\section{Proof of Theorem \ref{limskew}} 

\subsection{Proof of (\ref{main1}) in Theorem \ref{limskew}: ATM implied volatility
level}
This section is devoted to the proof of (\ref{main1}) in Theorem \ref{limskew}. We only show the result for  $I_A( 0,k^*)$ since the proof for $I_E(0,k^*)$ follows along the same lines.
We start proving the result for the implied  volatility in the uncorrelated case ($\rho=0$) that we  denote by $I_A^0(t,k)$.

\subsubsection{The uncorrelated case}

We aim to apply the decomposition for the option price obtained in Theorem \ref{decomposition}.
Observe that
$$
D_s^{W}\phi_r^2dr= \frac{(T-r)^2}{T^2} D_s^ W \sigma_r^2=\frac{(T-r)^2}{T^2} 2 \sigma_r D_s^W \sigma_r
=\frac{(T-r)^2}{T^2} 2 \sigma_r \rho D_s^{W'} \sigma_r.
$$
Thus, if $\rho=0$, the decomposition formula give us
that $V_{0}^A= \E(B_E( 0,M_0,k,v_{0}))$.
Then the ATMIV satisfies that 
\begin{equation*}
\begin{split} \label{dos}
&I_A^0( 0,k^*) = (B_A)^{-1}(0, S_0, y_0, k^*,V_{0}^A)  = \mathbb{E}\left( B_A^{-1}(0, S_0, y_0,k^*,\mathbb{E} B_E( 0,M_0,k^*,v_{0}) ) \right)  \\
& \quad= \mathbb{E}\left( B_A^{-1}(0, S_0, y_0,k^*,\mathbb{E} B_E( 0,M_0,k^*,v_{0})) \pm B_A^{-1}(0, S_0, y_0, k^*,B_E( 0,M_0,k^*,v_{0})) \right)   \\
& \quad=\mathbb{E} \left(B_A^{-1}(0, S_0, y_0,k^*,\Phi_0 ) -B_A^{-1}(0, S_0, y_0,k^*,\Phi_T )\right) +\sqrt{3}\mathbb{E}(v_{0})  \\
&\quad =\sqrt{3}\mathbb{E} \left( B_E^{-1}(0, M_0,k^*,\Phi_0) -B_E^{-1}(0, M_0,k^*,\Phi_T ) \right) +\sqrt{3}\mathbb{E}(v_{0}),
\end{split}
\end{equation*}
where $\Phi_{r}:=\E_{r}\left(B_E\left( 0,M_0,k^*,v_{0}\right) \right)$. The last two lines follow from the observation that $B_A^{-1}(t, S_t, y_t, k^*_t,B_E( t,M_t,k^*_t,\sigma )) =\frac{T\sqrt{3}}{T-t}\sigma$. 

Observe that  as $\rho=0$, the Brownian motions $W$ and $W'$ are independent. Thus, $\Phi_r = \E\left( B_E\left( 0,M_0,k^*,v_{0}\right)| \mathcal{F}_r^{W'} \right)$ and $(\Phi_r)_{r \geq 0}$ is a martingale wrt to the filtration $(\mathcal{F}^{W'}_r)_{r \geq 0}$. By the martingale representation  theorem, there exists a square integrable and  $\mathcal{F}^{W'}$-adapted  process $(U_r)_{r\geq 0}$ such  that
$$\Phi_r=\Phi_0+\int_0^r U_s dW_s'.$$ 
A direct application of the classical It\^o's formula gives 
\begin{equation*}
\begin{split}
&\E\left( B_E^{-1}(0, M_0,k^*,\Phi_0 )-B_E^{-1}(0, M_0, k^*,\Phi_T )\right) \\
&=-\E\left(\int_{0}^{T}( B_E^{-1}) ^{\prime }(0, M_0, k^*, \Phi_s ) U_{s}dW_{s}'
+\frac{1}{2}\int_{0}^{T}( B_E^{-1}) ^{\prime \prime}(0, M_0, k^*, \Phi_s) U_{s}^{2}ds\right), 
\end{split}
\end{equation*}
where $(B_E^{-1})^{\prime}$ and $(B_E^{-1})^{\prime \prime}$ denote, respectively, the first and  second  derivative of $B_E^{-1}$ with respect to $\sigma$. A direct computation gives $$B_E^{-1}(0, M_0, k^*, \Phi_s)=\frac{ \Phi_s\sqrt{2 \pi}}{\sqrt{T}} \quad \text{and} \quad ((B_E)^{^-1})^{\prime \prime }(0, M_0, k^*, \Phi_s)=0,$$ which leads to 
$$
I_A^0( 0,k^*) =\sqrt{3}\mathbb{E}(v_0).
$$
By continuity, we conclude that
\begin{equation} \label{corre}
\lim_{T \rightarrow 0} I_A^0( 0,k^*) = \sigma_0,
\end{equation}
which proves (\ref{main1}) in the uncorrelated case.

\subsubsection{The correlated case}

Using similar ideas as in the uncorrelated case we get that
\begin{equation*}
\begin{split} 
I_A( 0,k^*) &= (B_A)^{-1}(0, S_0, y_0, k^*,V_{0}^A) \\
& =  \mathbb{E}\left( (B_A)^{-1}(0, S_0, y_0,k^*,\Gamma_T) \pm (B_A)^{-1}(0, S_0, y_0,k^*,\Gamma_0) \right)   \\
&=\mathbb{E} \left((B_A)^{-1}(0, S_0, y_0,k^*,\Gamma_T ) -(B_A)^{-1}(0, S_0, y_0,k^*,\Gamma_0)\right) +I_A^0( 0,k^*)  \\
&=\sqrt{3}\mathbb{E} \left( B_E^{-1}(0, M_0,k^*,\Gamma_T) -B_E^{-1}(0, M_0,k^*,\Gamma_0 ) \right) + I_A^0( 0,k^*),
\end{split}
\end{equation*}
where $\Gamma_{s}:=\mathbb{E}[ B_E(0,M_0,k^*,v_{0})] +\frac{\rho }{2}\mathbb{E}\left(\int_{0}^{s} H(r,M_{r},k^*_t,v_{r})\Lambda _{r}dr\right)$.

A direct application of It\^o's formula gives 
\begin{equation*}
\begin{split}
I_A( 0,k^*) &= I_A^0( 0,k^*)  + \mathbb{E} \left(\int_{0}^{T}( B_E^{-1}) ^{\prime }(0, M_0, k^{\ast},\Gamma _{s}) H(s,M_s,k^*_t,v_{s})\Lambda _{s}ds\right)  \\
&=I_A^0( 0,k^*)  + \frac{\sqrt{2 \pi}}{\sqrt{T}}\mathbb{E} \left(\int_{0}^{T} H(s,M_s,k^*_t,v_{s})\Lambda _{s}ds\right).
\end{split}
\end{equation*}

Next, Hypotheses \ref{Hyp1} and \ref{Hyp4} together with Lemma \ref{bound} imply that
\begin{equation*}
\begin{split}
&\bigg\vert  \mathbb{E} \left(\int_{0}^{T} H(s,M_s,k^*,v_{s})\Lambda _{s}ds\right) \bigg\vert \leq C\mathbb{E} \left(\left(\int_0^T\phi_r^2dr\right)^{-1}\left(\int_{0}^{T} \vert \Lambda_{s} \vert  ds\right)\right) \\
&\qquad\leq C  \left(\int_0^T \frac{(T-r)^2}{T^2}dr\right)^{-1}\int_{0}^{T} \frac{(T-s)^3}{T^3}\int_s^T \mathbb{E}(\vert D_{s}^{W'}\sigma _{r} \vert)  drds\\
&\qquad\leq C \frac{1}{T^4}\int_0^T (T-s)^3\int_s^T(r-s)^{H-\frac{1}{2}}drds \\
&\qquad = C T^{H+\frac{1}{2}}.
\end{split}
\end{equation*}
Thus, we conclude that $\frac{1}{\sqrt{T}}\mathbb{E} \left(\int_{0}^{T} H(s,M_s,k^*_t,v_{s})\Lambda _{s}ds\right) \to 0$ as $T\to 0$.
Finally, using (\ref{corre}), we get that
$$
I_A( 0,k^*)\to \sigma_0
$$
as $T\to 0$, which concludes the proof of (\ref{main1}).

\subsection{Proof of (\ref{main2}) in Theorem \ref{limskew}: ATMIV
skew}\label{limiting-behaviour-of-the-at-the-money-implied-volatility-skew}

We provide the proof for the Asian call option case. The result for the European call follows by the same argument. 

Appealing to Propositions \ref{skew} and \ref{skew2} we have that
\begin{equation}\label{aux1} \begin{split}
\partial_kI_A(0,k^*) &= \frac{1}{\partial_{\sigma}B_A(0,S_0,y_0,k^*,I_A(0,k^*))}
\bigg\{\mathbb{E}\left( G(0,M_0,k^*,v_0) J_0\right)\\
&\qquad +\mathbb{E}\left(\frac{1}{2}\int_0^T \partial_{xxx}^3G(s,M_s,k^*,v_s) J_s\Lambda_sds\right)\\
&\qquad + \mathbb{E}\left(\int_0^T \partial_{x}G(s,M_s,k^*,v_s)\phi_s D^{-}J_sds\right)\bigg\}.
\end{split}
\end{equation}
We start bounding the second term in (\ref{aux1}). Using Hypotheses \ref{Hyp1} and \ref{Hyp4} together with Cauchy-Schwarz inequality we get that
\begin{equation*}
\begin{split}
&\mathbb{E}(\vert J_s \Lambda_s \vert)  \\
& \leq C \frac{(T-s)}{T^6} \mathbb{E}\left(\left(\int_s^T (T-r)^2\vert D_s^{W'}\sigma_r \vert   dr\right) \left(\int_s^T(T-u)^3\int_u^T \vert D_u^{W'}\sigma_r \vert dr du\right)\right) \\
&\leq C \frac{(T-s)}{T^6} \sqrt{\mathbb{E}\left(\left(\int_s^T (T-r)^2\vert D_s^{W'}\sigma_r \vert dr\right)^2\right)\mathbb{E}\left(\left(\int_s^T (T-u)^3\int_u^T \vert  D_u^{W'}\sigma_r\vert  dr du\right)^2\right)}\\
& \leq C \frac{(T-s)}{T^6} \sqrt{ (T-s)^5\int_s^T\mathbb{E}\left(\vert D_s^{W'}\sigma_r \vert^2\right) dr
(T-s)^8\int_s^T\int_u^T \mathbb{E}\left(\vert  D_u^{W'}\sigma_r \vert^2  \right) dr du}\\
&\leq  C \frac{(T-s)^{2H+8}}{T^6}.
\end{split}
\end{equation*}
Then, using Lemma \ref{bound} we conclude that 
\begin{equation*}
\begin{split}
&\bigg\vert \mathbb{E}\left(\frac{1}{2}\int_0^T \partial_{xxx}^3G(s,M_s,k,v_s) J_s\Lambda_sds \right) \bigg\vert \\
&\qquad \leq C \frac{1}{T^3} \int_0^T \frac{(T-s)^{2H+8}}{T^6} dt = C T^{2H}.
\end{split}
\end{equation*}

We next bound the third term in (\ref{aux1}). We  have that
\begin{align*}
\begin{split}
\int_s^TD_s^W\Lambda_rdr &= \int_s^TD_s^W\left(\phi_r\int_r^TD_r^W\phi_u^2du\right)dr\\
&= \int_s^T\left(\left(D_s^W\phi_r\right)\int_r^TD_r^W\phi_u^2du+\phi_r\int_r^TD_s^WD_r^W\phi_u^2du\right)dr,
\end{split}
\end{align*}
where
$$
 D_s^WD_r^W\phi_u^2   =2(D_s^W\phi_uD_r^W\phi_u+\phi_uD_s^WD_r^W\phi_u).
$$
Hypothesis \ref{Hyp1} implies that
$$\vert D_s^WD_r^W\phi_u^2 \vert \leq C \frac{(T-u)^2}{T^2}\left( \vert D_s^{W'}\sigma_u \vert \, \vert D_r^{W'}\sigma_u \vert + \vert D_s^{W'}D_r^{W'}\sigma_u \vert  \right).$$
Then, appealing to Hypothesis \ref{Hyp4}, we get that 
\begin{equation*}
\begin{split}
\mathbb{E}\left(\bigg\vert\phi_r\int_r^TD_s^WD_r^W\phi_u^2du \bigg\vert\right) & \leq C \frac{(T-r)^3}{T^3} \int_r^T  (u-r)^{H-\frac{1}{2}}(u-s)^{H-\frac{1}{2}} du \\
&\leq C \frac{(T-r)^{3}}{T^3} (T-s)^{2H}
\end{split}
\end{equation*}
and
\begin{equation*}
\begin{split}
\mathbb{E}\left(\bigg\vert D_s^W\phi_r\int_r^TD_r^W\phi_u^2du \bigg\vert\right) &\leq C \frac{(T-r)^3}{T^3} 
\mathbb{E}\left( \vert  D_s^{W'}\sigma_r \vert \int_r^T \vert D_r^{W'}\sigma_u \vert du\right) \\
&\leq C \frac{(T-r)^3}{T^3}(r-s)^{H-\frac{1}{2}}(T-r)^{H+\frac{1}{2}}.
\end{split}
\end{equation*}
Therefore, we conclude that 
\begin{equation*}
\begin{split}
\mathbb{E}\left(\bigg\vert\int_s^TD_s^W\Lambda_rdr \bigg\vert\right) & \leq C\int_s^T \frac{(T-r)^3}{T^3}\left( (r-s)^{H-\frac{1}{2}}(T-s)^{H+\frac{1}{2}} + (T-s)^{2H}\right) dr \\
& \leq C \frac{(T-s)^{2H+4}}{T^3}.
\end{split}
\end{equation*}
Finally, using Lemma \ref{bound} we obtain that the  third term in (\ref{aux1}) satisfies that
\begin{equation*}
\begin{split}
\bigg\vert\mathbb{E}\left(\int_0^T \partial_{x}G(s,M_s,k,v_s)\phi_s D^{-}J_sds\right) \bigg\vert &\leq C  \frac{1}{T^2}\int_0^T  \frac{(T-s)^{5+2H}}{T^4} ds\\
&= CT^{2H}.
\end{split}
\end{equation*}
Taking into account the relation $\partial_{\sigma}B_A(0,S_0,y_0, k^*,\sigma) = \frac{1}{\sqrt{3}}\partial_{\sigma}B_E(0,S_0,k^*,\sigma)$ we get that
$$
\partial_{\sigma}B_A(0,S_0,y_0,k^*,\sigma)=\frac{\sqrt{T} }{\sqrt{6 \pi }}.
$$
Thus, as
$$
\lim_{T \rightarrow 0} T^{\max(\frac12-H,0)} \frac{1}{\sqrt{T}} T^{2H}=0,
$$
the second and third terms in (\ref{aux1}) will contribute as $0$ in the limit (\ref{main22}).

We  finally study the  first term in (\ref{aux1}). 
Direct differentiation give us  that
$$
G(0,M_0,k^*,v_0)= \frac{1}{2 \sqrt{2 \pi} v_0^3 T^{3/2}}.
$$
Thus, by  (\ref{aux1}), we get that\begin{equation*}
\begin{split}
&\lim_{T \to 0} T^{\max(\frac{1}{2}-H, 0)}\partial_kI_A(0,k^*) \\
&= \sqrt{3} \rho\lim_{T \to 0}  T^{\max(\frac{1}{2}-H, 0)} \frac{1}{T^5}
 \mathbb{E}\left( v_0^{-3} \int_0^T  \sigma_r (T-r)\int_r^T \sigma_u (T-u)^2 D_r^{W'}\sigma_u  du dr\right).
\end{split}
\end{equation*}

Next, writing $v_0^{-3}=(v_0^{-3}-\sigma_0^{-3} \sqrt{3}^3)+\sigma_0^{-3} \sqrt{3}^3$, the above limit can be written as  $A_1+A_2$, where
\begin{equation*}
\begin{split}
A_1 &= \sqrt{3} \rho\lim_{T \to 0}  T^{\max(\frac{1}{2}-H, 0)} \frac{1}{T^5}
 \mathbb{E}\left( (v_0^{-3}-\sigma_0^{-3} \sqrt{3}^3)F_T \right), \\
 A_2 &= \sqrt{3} \rho\lim_{T \to 0}  T^{\max(\frac{1}{2}-H, 0)} \frac{1}{T^5}
 \mathbb{E}\left( \sigma_0^{-3} \sqrt{3}^3 F_T\right),
\end{split}
\end{equation*}
and
$$
F_T=\int_0^T  \sigma_r (T-r)\int_r^T \sigma_u (T-u)^2 D_r^{W'}\sigma_u  du dr.
$$

We start showing that $A_1=0$. By Cauchy-Schwarz inequality,
\begin{equation*} \begin{split}
&T^{\max(\frac{1}{2}-H, 0)} \frac{1}{T^5}
 \mathbb{E}\left( (v_0^{-3}-\sigma_0^{-3} \sqrt{3}^3)F_T \right)\\
 &\leq T^{\max(\frac{1}{2}-H, 0)} \frac{1}{T^5}
 (\mathbb{E} \vert v_0^{-3}-\sigma_0^{-3} \sqrt{3}^3\vert^2)^{1/2} (\E\vert F_T \vert^2 )^{1/2}.
 \end{split}
\end{equation*}
By  continuity, it holds that $v_0^{-3}$ converges  towards $\sigma_0^{-3} \sqrt{3}^3$ a.s. as $T \rightarrow 0$. Thus, by dominated convergence the limit as $T \rightarrow 0$ of $(\mathbb{E} \vert v_0^{-3}-\sigma_0^{-3} \sqrt{3}^3\vert^2)^{1/2}$ is $0$. 
On  the other hand, by Hypotheses \ref{Hyp1} and (\ref{d1}), we have that
$$
T^{\max(\frac{1}{2}-H, 0)} \frac{1}{T^5}(\E\vert F_T \vert^2 )^{1/2} \leq C T^{\max(\frac{1}{2}-H, 0)} T^{H-\frac12},
$$
which is bounded, thus we have shown that $A_1=0$. We are left to show that
$$
A_2=\lim_{T \to 0} T^{\max(\frac12-H, 0)}\frac{9 \rho}{ \sigma_0 T^5}\int_0^T (T-r)\int_r^T(T-u)^2\E(D_r^{W'}\sigma_u) du  dr,
$$
which is the limit in (\ref{main22}), which is finite by assumption. This follows by writing $\sigma_r=(\sigma_r-\sigma_0)+\sigma_0$ and $\sigma_u=(\sigma_u-\sigma_0)+\sigma_0$
and proceeding as above using the a.s. continuity of $\sigma_t$ and dominated convergence to show that the first term in each decomposition converges to zero. 
This concludes the proof of (\ref{main22}).

\section{Numerical analysis}\label{numerical-analysis}

In this section we apply the  theoretical results presented earlier to some examples of stochastic volatility models. We justify our findings with numerical simulations.

\subsection{The SABR model}\label{the-sabr-model}

We consider the SABR stochastic volatility model with skewness parameter 1, which is the most common case
from a practical point of view. This corresponds to equation (\ref{B_Epm}), where $S_t$ denotes the  forward price of the underlying asset and
$$
    d\sigma_t  = \alpha\sigma_t dW_t', \qquad \sigma_t=\sigma_0e^{\alpha W'_t-\frac{\alpha^2}{2}t}.
  $$ 
  where $\alpha>0$ is the volatility of volatility.

Notice that this model does not satisfy Hypothesis 1, so a truncation argument  is needed in order to check that Theorem \ref{limskew} is true for this model. We skip it here for the sake of simplicity since it is identical to the one presented in \cite{mp2022}.
 
For \(r\leq t\), we have that
$
D_r^{W'}\sigma_t = \alpha \sigma_t
$
and $\mathbb E \left(D_r^{W'}\sigma_t\right) = \alpha \sigma_0$. Therefore, applying Theorem \ref{limskew} we conclude that
\begin{align*}
\begin{split}
\lim_{T \to 0} \partial_k I_A(0,k^{*}) &= \frac{3}{5}\rho \alpha,\\
\lim_{T \to 0} \partial_k I_E(0,k^{*}) &= \frac{1}{2}\rho \alpha.
 \end{split}
 \end{align*}

We next proceed with some numerical simulations using the following parameters
$$S_0 = 10, \, T=\frac{1}{252}, \, dt=\frac{T}{50}, \, \alpha = 0.5, \, \rho = -0.3, \, \sigma_0 =(0.1, 0.2, \dots, 1.4).$$

In order to get estimates of an Asian call option we use antithetic variates. The estimator of the price is defined as follows
\begin{align}
\begin{split}
\hat{V}_{sabr} &= \frac{ \frac{1}{N}\sum_{i=1}^N V_T^i + \frac{1}{N}\sum_{i=1}^N V_T^{i,A} }{2},
\end{split}
\label{antithetic}
\end{align}
where $N=2000000$ and the sub-index $A$ denotes the value of an Asian call option computed on the antithetic trajectory of a Monte Carlo path.

In order to retrieve the implied volatility we use method presented in \cite{jaeckel}. For the estimation of the skew, we use the following expression which allows us to avoid a finite difference method
\begin{align}
\begin{split}
\partial_k \hat{I}^A(0,k^{*}) &= \frac{-\partial_kB_A(0,X_0,y_0,k^*,I_A(0,k^*))-\mathbb{E}\left(1_{A_T\geq k^*}\right)}{\partial_{\sigma}B_A(0,X_0,y_0,k^*,I_A(0,k^*))}=\frac{\frac{1}{2}-\mathbb{E}\left(1_{A_T\geq k^*}\right)}{\sqrt{\frac{T}{6 \pi}}},\\
\partial_k \hat{I}^E(0,k^{*}) &= \frac{-\partial_k B_E(0,S_0,k^*,I_E(0,k^*))-\mathbb{E}\left(1_{S_T\geq k^*}\right)}{\partial_{\sigma}B_E(0,S_0,k^*,I_E(0,k^*))}=\frac{\frac{1}{2}-\mathbb{E}\left(1_{S_T\geq k^*}\right)}{\sqrt{\frac{T}{2 \pi}}}.
\end{split}
\label{skew_estimator2}
\end{align}

We use equation \eqref{skew_estimator2} in order to get estimates of the skew. In Figure \ref{fig2} we present the results of a Monte Carlo simulation
which aims to evaluate numerically the level and the skew of the at-the-money implied volatility of an Asian call option under the SABR model. And in Figure \ref{fig22} we do the same, but for the European call option. We observe  that all the numerical results fit the theoretical ones.
\begin{figure}[h]
\centering
\begin{tabular}{ccc}
  \includegraphics[width=60mm]{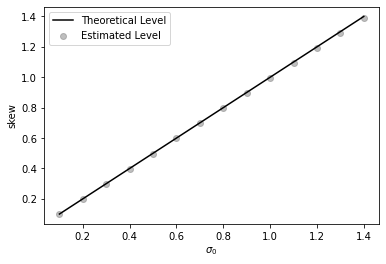} &   \includegraphics[width=60mm]{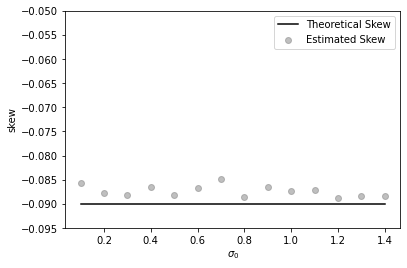} \\[2pt]
\end{tabular}
\caption{At-the-money level and the skew of the Implied Volatility of an Asian call under the SABR model.}
\label{fig2}
\end{figure}

\begin{figure}[h]
\centering
\begin{tabular}{ccc}
  \includegraphics[width=60mm]{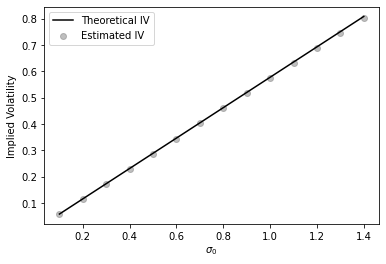}  &
   \includegraphics[width=60mm]{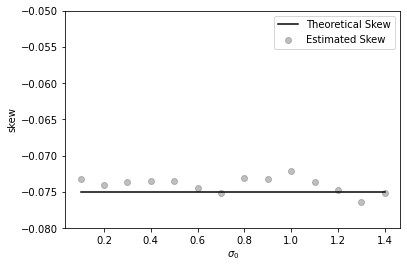} \\[2pt]
   \end{tabular}
\caption{At-the-money level and the skew of the Implied Volatility of the European call option under the SABR model.}
\label{fig22}
\end{figure}

\subsection{The fractional Bergomi model}\label{fractional-bergomi-model}

The fractional Bergomi stochastic volatility model asssumes equation (\ref{B_Epm}) with
\begin{equation*} \begin{split}
    \sigma_t^2 &= \sigma_0^2 e^{v \sqrt{2H}Z_t-\frac{1}{2}v^2t^{2H}}, \\
    Z_t& = \int_0^t(t-s)^{H-\frac{1}{2}}dW'_s,
    \end{split}
    \end{equation*}
    where $H \in  (0,1)$ and $v>0$.
    
As for the SABR model, a truncation argument as in \cite{mp2022} is needed in order to apply the results in the previous sections, as Hypothesis 1 is not satisfied.
Moreover, for $r \leq u$, we have 
\begin{equation*} 
\begin{split}
D_r^{W'}\sigma_u &=  \frac12 \sigma_u v \sqrt{2 H} (u-r)^{H-\frac{1}{2}}, \\
\E(D_r^{W'}\sigma_u) &=e^{-\frac18 v^2 u^{2H}}\frac12  \sigma_0
v \sqrt{2 H} (u-r)^{H-\frac{1}{2}},
\end{split}
\end{equation*}
which gives
\begin{equation} \label{rou}
\begin{split}
\lim_{T \to 0} \partial_kI_A(0,k^{*}) 
 &=\begin{cases}
0 \quad &\text{if} \quad H>\frac12 \\
\frac{3\rho v }{10}  \quad &\text{if} \quad H=\frac12,
\end{cases},\\
\lim_{T \to 0} \partial_kI_E(0,k^{*}) 
 &=\begin{cases}
0 \quad &\text{if} \quad H>\frac12 \\
\frac{\rho v }{4}\quad &\text{if} \quad H=\frac12.
\end{cases}
\end{split}
\end{equation}
and  for $H<\frac12$
\begin{equation} \begin{split}\label{rou2}
\lim_{T \to 0} T^{\frac{1}{2}-H}\partial_k I_A(0,k^{*})&=\frac{288\rho v \sqrt{2 H}  }{(2 H+9) \left(8 H^3+36 H^2+46 H+15\right)},\\
\lim_{T \to 0} T^{\frac{1}{2}-H}\partial_k I_E(0,k^{*})&=\frac{2\rho v \sqrt{2 H} }{ \left(3+ 4H(2+H)\right)}.
\end{split}
\end{equation}

The parameters used for the Monte Carlo simulation are the following
$$
S_0 = 100, \, T=0.001, \, dt=\frac{T}{50}, \, H=(0.4, 0.7), \, v = 0.5, \, \rho = -0.3, \, \sigma_0 =(0.1, 0.2, \dots, 1.4).$$

In order to get estimates of the price of an Asian call option under the fractional Bergomi model we use antithetic variates presented in equation \eqref{antithetic}. Then, the level of at-the-money implied volatility of an Asian and European call options are presented on Figures \ref{fig5} and \ref{fig55}, respectively. One can see that the result is independent of $H$.
\begin{figure}[h]
\centering
\begin{tabular}{ccc}
  \includegraphics[width=60mm]{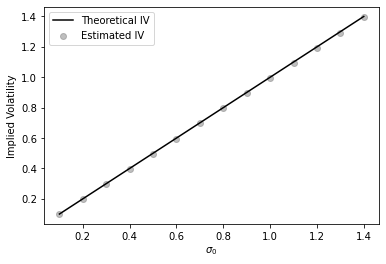} &   \includegraphics[width=60mm]{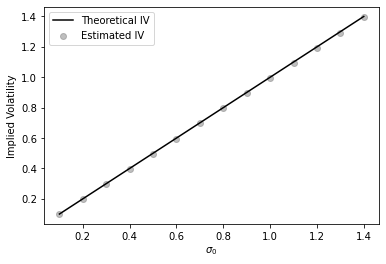} \\
(a) H=0.4 & (b) H=0.7 \\[2pt]
\end{tabular}
\caption{At-the-money level of the IV of an Asian call under fractional Bergomi model.}
\label{fig5}
\end{figure}
\begin{figure}[h]
\centering
\begin{tabular}{ccc}
  \includegraphics[width=60mm]{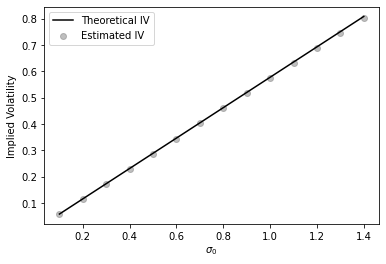} &   \includegraphics[width=60mm]{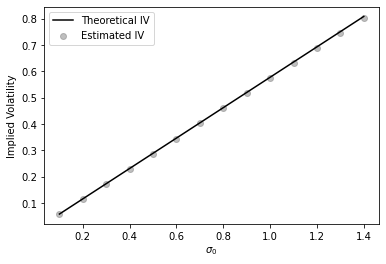} \\
(a) H=0.4 & (b) H=0.7 \\[2pt]
\end{tabular}
\caption{At-the-money level of the IV of the European call under fractional Bergomi model.}
\label{fig55}
\end{figure}

On Figures \ref{fig6} and \ref{fig66} we present the ATM implied volatility skew as a function of maturity of an Asian and European call options, respectively, for two different values of $H$, where we observe the blow up to $-\infty$ for the case $H=0.4$. Equation \eqref{rou} suggests that the theoretical values of the slope of the at-the-money skew (as a function $\alpha T^{\beta}$) in the cases of Asian and European call options with $H < \frac{1}{2}$ are $-0.0497$ and $-0.0336$, respectively. We conclude that theoretical results are in line with the observed numbers.
\begin{figure}[h]
\centering
\begin{tabular}{ccc}
  \includegraphics[width=60mm]{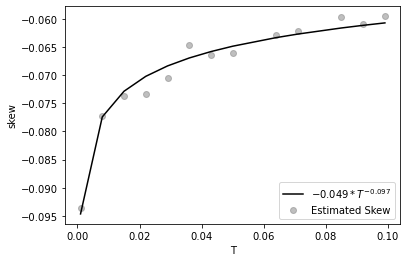} &   \includegraphics[width=60mm]{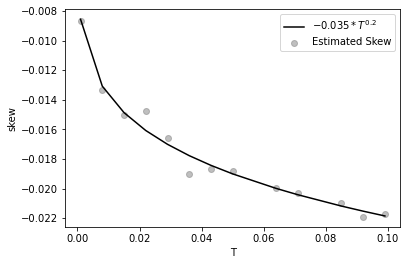} \\
(a) H=0.4, $\sigma_0=0.3$ & (b) H=0.7, $\sigma_0=0.3$  \\[2pt]
\end{tabular}
\caption{At-the-money IV skew of an Asian call as a function of $T$ under fractional Bergomi model.}
\label{fig6}
\end{figure}
\begin{figure}[h]
\centering
\begin{tabular}{ccc}
  \includegraphics[width=60mm]{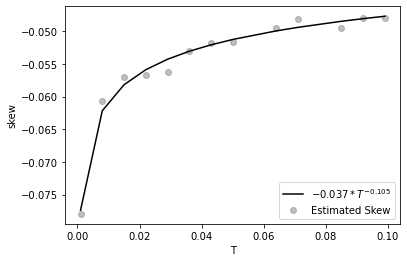} &   \includegraphics[width=60mm]{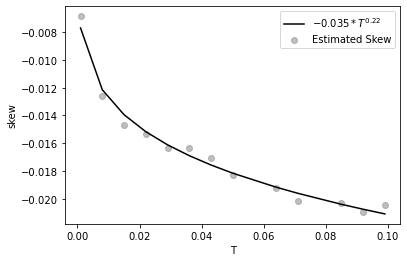} \\
(a) H=0.4, $\sigma_0=0.3$ & (b) H=0.7, $\sigma_0=0.3$  \\[2pt]
\end{tabular}
\caption{At-the-money IV skew of the European call as a function of $T$ under fractional Bergomi model.}
\label{fig66}
\end{figure}
Due to the blow up of the at-the-money implied volatility skew of an Asian and European call options when $H<\frac{1}{2}$ we also plot (as a function of $\sigma_0$) the quantities $T^{\frac{1}{2}-H}\partial_k \hat{I}^A(0,k^{*})$ and $T^{\frac{1}{2}-H}\partial_k \hat{I}^E(0,k^{*})$ for $H=0.4$ as well as $\partial_k \hat{I}^A(0,k^{*})$ and $\partial_k \hat{I}^E(0,k^{*})$  for $H=0.7$ for fixed value of $T=0.001$. The result is presented at Figures \ref{fig7} and \ref{fig77}. We see that numerical estimates agree with the presented theoretical findings.  
\begin{figure}[h]
\centering
\begin{tabular}{ccc}
  \includegraphics[width=60mm]{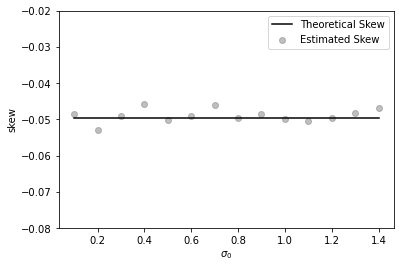} &   \includegraphics[width=60mm]{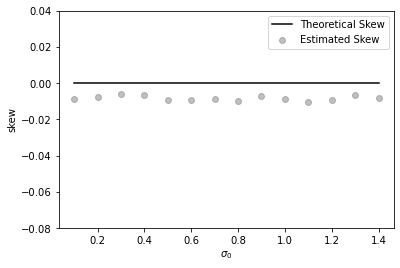} \\
(a) H=0.4 & (b) H=0.7 \\[2pt]
\end{tabular}
\caption{At-the-money IV skew of an Asian call as a function of $\sigma_0$ under fractional Bergomi model}
\label{fig7}
\end{figure}
\begin{figure}[h]
\centering
\begin{tabular}{ccc}
  \includegraphics[width=60mm]{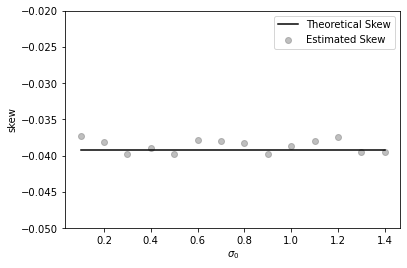} &   \includegraphics[width=60mm]{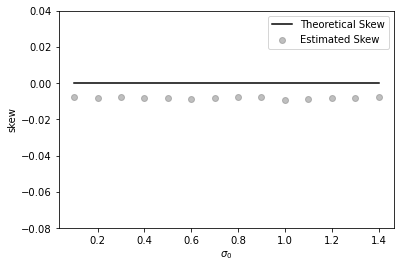} \\
(a) H=0.4 & (b) H=0.7 \\[2pt]
\end{tabular}
\caption{At-the-money IV skew of the European call as a function of $\sigma_0$ under fractional Bergomi model}
\label{fig77}
\end{figure}

\subsubsection{Local volatility model}\label{local-volatility-model}

We  consider the local volatility model
\begin{equation*} 
dS_t=\sigma(S_t) dW_t,
\end{equation*}
where $\sigma \in C^2_b$ (bounded with bounded first and second derivatives) and $\sigma (x)  \geq c >0$, for all $x$. 
We apply Theorem \ref{limskew} when $S_t$ follows this model with $\rho=0$ and $\sigma_t=\sigma(S_t)$. Under the above assumptions it is easy to see that all the hypotheses are satisfied with $H=\frac12$. Thus, for the ATMIV  level, we directly see that the limit is equal to $\sigma(S_0)$. For the ATMIV skew, we need to compute $D_r \sigma(S_t)$.  We have for $r \leq u$,
\begin{equation*} \begin{split}
D^W_r \sigma(S_u)=\sigma'(S_u) D^W_rS_u=\sigma'(S_u) \left(\sigma(S_r) +\int_r^u D_r^W\sigma(S_s) dW_s \right).
\end{split}
\end{equation*}
In particular,
\begin{equation*}
\E\left(D_r \sigma(S_u) \right)=\E\left(\sigma'(S_u)\sigma(S_r)\right).
\end{equation*}
This can be written as 
\begin{equation*} \begin{split}
\E\left(D_r \sigma(S_u) \right)=\sigma'(S_0) \sigma(S_0) &+\E\left((\sigma'(S_u)-\sigma'(S_0)) \sigma(S_r)\right)\\
& +\sigma'(S_0) \E\left((\sigma(S_r)-\sigma(S_0))\right).
\end{split}
\end{equation*}
Then,  using the mean value theorem and the fact that $S_t$ has H\"older continuous sample paths of any order $\gamma<\frac12$, we see that the last two terms of the last display will not contribute in the limit (\ref{main2}) and (\ref{main22}). Thus, we get
 \begin{align*}
  \begin{split}
\lim_{T \to 0} \partial_kI_A(0,k^{*})=\frac{3}{5} \sigma'(S_0),\\
\lim_{T \to 0} \partial_kI_E(0,k^{*})=\frac{1}{2} \sigma'(S_0).
  \end{split}
\end{align*}

\subsection{Approximations of implied volatility} \label{proxyNumerics}
In this last section we compare numerically the linear approximation of the implied volatility of an Asian and European call options given by equation \eqref{proxyPrice} with the classical Monte Carlo approximation. For the Asian case we consider the SABR model and for the European case we deal with the fractional Bergomi model.

\subsubsection{Asian call option under the SABR model}

We proceed with the following numerical experiment. We randomly sample the parameters as  $\sigma_0\sim U(0.2,0.8)$, $\alpha\sim U(0.3,1.5)$ and $\rho\sim U(-0.9,0.9)$, where $U$ stands for the uniform distribution. We fix $S_0=10$ and consider the following strikes $$K=( 9.97  ,  9.9743,  9.9786,  9.9829,  9.9872,  9.9915,  9.9958, 10.0001)$$ and maturities $T=(0.01, 0.1, 0.5, 1, 2)$. We consider the narrow OTM range around ATM region because due to the law of large numbers deep OTM quickly loose the value and prices become indistinguishable from 0. An Asian call option is priced using Monte Carlo with 100000 paths and discretization step 0.01 and the IV is  estimated using the same approach as discussed in Section 5.1. The approximation accuracy of the Monte Carlo for the IV is computed using the pointwise absolute relative error with respect to the 95$\%$ Monte Carlo confidence interval for each parameter. Then the median absolute percentatge error across 2000 random parameter combinations is computed in Table \ref{tab:ApproximationErrorMCSABR}. We then compare the Monte Carlo estimated IV with the approximation formula \eqref{proxyPrice}. For this, we compute in Table \ref{tab:sub_first0} the median absolute relative percentage error of the difference with the Monte Carlo prices computed across 2000 random parameter combinations.

\begin{table}[h!]
\begin{center}
\begin{tabular}{||c c c c c c c c c||}
\hline
T/K   & 9.9700  &  9.9743  &  9.9786  &  9.9829  &  9.9872  &  9.9915  &  9.9958  &  10.0001 \\ 
\hline\hline
0.01 & 0.94 & 0.92 & 0.91 & 0.91 &     0.91 &     0.92 &     0.93 &     0.94 \\
0.1 & 0.94 & 0.94 & 0.94 &  0.95 &  0.95 &     0.95 &     0.95 &     0.95 \\
0.5  & 1.02 & 1.02 &     1.02 &     1.02 &     1.02 &     1 &     1 & 1\\
1  & 1.1 &  1.1 &  1.1 &  1.1 & 1.1 &     1.1 &     1.1 &     1 \\
2 & 1.3 & 1.3 &  1.3 & 1.3 & 1.3 &  1.3 &  1.3 &   1.3 \\
\hline
\end{tabular}
\caption{Median absolute percentage error wrt the 95\% Monte Carlo confidence interval for an  Asian call IV under the SABR model.} \label{tab:ApproximationErrorMCSABR}
\end{center}
\end{table}

\begin{table}[h!]
\begin{center}
    \begin{tabular}{||c c c c c c c c c||}
    \hline
   T/K & 9.9700  &  9.9743  &  9.9786  &  9.9829  &  9.9872  &  9.9915  &  9.9958  &  10.0001 \\ 
    \hline\hline
0.01 & 0.24 & 0.23 & 0.22 & 0.23 & 0.23 & 0.24 &  0.24 & 0.72 \\
0.1 & 0.26 & 0.25 & 0.25 & 0.25 &  0.24 & 0.24 & 0.24 & 0.29 \\
0.5 & 0.89 & 0.89 &  0.87 & 0.87 & 0.87 & 0.87 & 0.86 & 0.81 \\
1 & 1.95 & 1.93 & 1.92 &  1.91 & 1.9 & 1.89 &  1.89 & 1.84 \\
2 & 4.04 & 4.04 & 4.03 & 4.01 & 4.02 & 4.05 & 4.02 & 4  \\
    \hline
    \end{tabular}
   \caption{Median absolute percentage error of the difference}\label{tab:sub_first0}
   \end{center}
\end{table}

We observe that our implied volatility approximation performs well in the ATM region for short dated options. We achieve implied volatility approximation accuracy comparable to the original Monte Carlo simulation. However, the quality of it impairs as soon as we move in the maturity ($T>0.5$) and moneyness dimension.

\subsubsection{European call option under the fractional Bergomi model}

In contrast to the Asian case, we implement Monte Carlo with 200000 paths and discretization step 0.02, and strikes $K=(9. ,9.1,  9.2, \dots, 10)$. We randomly sample the parameters of the model as $\sigma_0\sim U(0.2,0.8)$, $v \sim U(0.3,1.5)$ and $\rho\sim U(-0.9,0.9)$. We keep the values of $S_0$ and $T$ as in the SABR case. In order to investigate the influence of fractionalness of the volatility process we consider two values of $H=\{0.2,0.7\}$. 

We present the accuracy of the Monte Carlo pricing in Tables \ref{tab:ApproximationErrorMCERB02} and  \ref{tab:ApproximationErrorMCERB07} for $H=0.2$ and $H=0.7$, respectively. Then, the median absolute relative percentage error of the difference with the Monte Carlo prices computed across 2000 random parameter combinations and the approximation formula (\ref{proxyPrice}) is given in Tables \ref{tab:sub_firstE0} and \ref{tab:sub_firstE00}, respectively for $H=0.2$ and $H=0.7$.

\begin{table}[h!]
\begin{center}
\begin{tabular}{||c c c c c c c c c c c c||}
\hline
T/K  & 9.0  &   9.1  &   9.2  &   9.3  &   9.4  &   9.5  &   9.6  &   9.7  &   9.8  &  9.9  &  10.0 \\ 
\hline\hline
0.01     &  5.70 &  5.81 &  5.86 &  5.99 &  6.05 &  6.24 &  6.46 &  6.71 &  7.15 &  2.00 &  0.67 \\
0.10     &  7.10 &  7.49 &  7.74 &  7.72 &  7.87 &  6.76 &  4.68 &  2.37 &  1.13 &  0.74 &  0.68 \\
0.50     &  6.16 &  5.36 &  4.18 &  3.19 &  2.28 &  1.58 &  1.21 &  0.95 &  0.80 &  0.73 &  0.70 \\
1.00     &  3.31 &  2.71 &  2.15 &  1.69 &  1.42 &  1.17 &  0.97 &  0.85 &  0.78 &  0.74 &  0.72 \\
2.00     &  1.87 &  1.64 &  1.43 &  1.26 &  1.10 &  0.97 &  0.88 &  0.83 &  0.78 &  0.76 &  0.74 \\
\hline
\end{tabular}
\caption{Median absolute percentage error wrt the 95\% Monte Carlo confidence interval for the European call IV under the fractional Bergomi (H=0.2) model.} \label{tab:ApproximationErrorMCERB02}
\end{center}
\end{table}
\begin{table}[h!]
\begin{center}
\begin{tabular}{||c c c c c c c c c c c c||}
\hline
T/K  & 9.0  &   9.1  &   9.2  &   9.3  &   9.4  &   9.5  &   9.6  &   9.7  &   9.8  &  9.9  &  10 \\ 
\hline\hline
0.01     &  4.26 &  4.35 &  4.27 &  4.24 &  4.89 &  4.39 &  4.87 &  4.92 &  3.79 &  0.86 &  0.66 \\
0.10     &  5.88 &  5.99 &  6.68 &  6.41 &  5.67 &  4.25 &  2.30 &  1.03 &  0.71 &  0.64 &  0.66 \\
0.50     &  4.93 &  3.95 &  2.91 &  2.10 &  1.45 &  1.07 &  0.86 &  0.74 &  0.68 &  0.67 &  0.67 \\
1.00     &  2.76 &  2.23 &  1.74 &  1.37 &  1.14 &  0.95 &  0.83 &  0.75 &  0.72 &  0.70 &  0.70 \\
2.00     &  1.86 &  1.62 &  1.41 &  1.22 &  1.07 &  0.96 &  0.89 &  0.84 &  0.80 &  0.78 &  0.77 \\
\hline
\end{tabular}
\caption{Median absolute percentage error wrt the 95\% Monte Carlo confidence interval for the European call IV under the fractional Bergomi (H=0.7) model.} \label{tab:ApproximationErrorMCERB07}
\end{center}
\end{table}
 \begin{table}
\begin{center}
    \begin{tabular}{||c c c c c c c c c c c c||}
    \hline
    T/K   & 9.0  &   9.1  &   9.2  &   9.3  &   9.4  &   9.5  &   9.6  &   9.7  &   9.8  &  9.9  &  10.0 \\ 
    \hline\hline
0.01     &  79.81 &  78.02 &  75.87 &  72.65 &  68.73 &  63.17 &  54.73 &  41.38 &  15.11 &  0.50 &  0.12 \\
0.10     &  45.08 &  40.44 &  34.55 &  26.61 &  17.64 &  10.22 &   5.74 &   3.15 &   1.56 &  0.68 &  0.36 \\
0.50     &  18.00 &  15.85 &  13.73 &  11.79 &   9.59 &   7.63 &   5.95 &   4.34 &   2.98 &  2.01 &  1.64 \\
1.00     &  17.40 &  15.76 &  14.42 &  12.59 &  10.99 &   9.32 &   7.54 &   6.04 &   4.56 &  3.62 &  3.27 \\
2.00     &  18.97 &  17.86 &  16.57 &  15.07 &  13.58 &  11.97 &  10.36 &   8.97 &   7.64 &  6.76 &  6.31  \\
    \hline
    \end{tabular}
   \caption{Median absolute percentage error of the difference $H=0.2$}\label{tab:sub_firstE0}
   \end{center}
\end{table}
\begin{table}[h!]
\begin{center}
    \begin{tabular}{||c c c c c c c c c c c c||}
    \hline
   T/K  & 9.0  &   9.1  &   9.2  &   9.3  &   9.4  &   9.5  &   9.6  &   9.7  &   9.8  &  9.9  &  10 \\ 
    \hline\hline
0.01     &  76.52 &  74.74 &  72.48 &  70.59 &  67.12 &  61.92 &  54.90 &  42.06 &  17.64 &  1.09 &  0.64 \\
0.10     &  44.01 &  39.03 &  33.18 &  25.90 &  16.32 &   9.26 &   4.73 &   1.97 &   0.98 &  1.06 &  1.53 \\
0.50     &   9.72 &   7.38 &   5.62 &   4.08 &   2.77 &   2.04 &   1.76 &   1.78 &   2.17 &  2.75 &  2.99 \\
1.00     &   5.39 &   4.22 &   3.30 &   2.64 &   2.24 &   2.19 &   2.35 &   2.74 &   3.15 &  3.80 &  3.96 \\
2.00     &   3.37 &   2.90 &   2.86 &   2.83 &   2.95 &   3.28 &   3.64 &   3.98 &   4.82 &  5.15 &  5.25  \\
    \hline
    \end{tabular}
   \caption{Median absolute percentage error of the difference $H=0.7$}\label{tab:sub_firstE00}
   \end{center}
\end{table}

We observe that the error gets higher when we go further from the ATM region. We also observe that the quality of the linear approximation improves with the increase of the maturity. This is due to the fact that IV smile becomes flatter as the option maturity increases.

\appendix

\section{A primer on Malliavin Calculus} \label{MCintro}

We introduce the elementary notions of the
Malliavin calculus used in this paper (see \cite{Nualart2018}). Let us consider a standard Brownian motion $Z=(Z_t)_{t \in [0,T]}$ defined on a complete probability space $(\Omega, \mathcal{F}, \mathbb{P})$ and the corresponding filtration $\mathcal{F}_t$ generated by $Z_t$. Let ${\cal S}^Z$ be the set of random variables of the form
\begin{equation}
\label{eq:2.1}
F=f(Z(h_{1}),\ldots ,Z(h_{n})),  
\end{equation}
with $h_{1},\ldots ,h_{n}\in L^2([0,T])$, $Z(h_i)$ denotes the Wiener integral of the function $h_i$, for $i=1,..,n$, and $f\in C_{b}^{\infty }(\mathbb{R}^n) $ 
(i.e., $f$ and all its partial derivatives are bounded). Then the Malliavin 
derivative of $F$, $D^Z F$,  is defined
as the stochastic process given by 
\begin{equation*}
D_{s}^ZF=\sum_{j=1}^{n}{\frac{\partial f}{\partial x_{j}}}(Z(h_{1}),\ldots ,Z(h_{n})) h_j(s), \quad s\in [0,T].
\end{equation*}
This operator is closable from $L^{p}(\Omega )$ to $L^p(\Omega; L^2([0,T])$, for all $p \geq 1$, and we denote by ${\mathbb{D}}_{Z}^{1,p}$ the
closure of ${\cal S}^Z$ with respect to the norm
$$
||F||_{1,p}=\left( \E\left| F\right|
^{p}+\E||D^Z F||_{L^{2}([0,T])}^{p}\right) ^{1/p}.
$$
We also consider the iterated
derivatives \(D^{Z,n}\) for all integers  \(n > 1\) whose domains will be denoted by
\(\mathbb{D}^{n,p}_Z\), for all $p \geq 1$. We will use the notation $\mathbb{L}_Z^{n,p}:=L^p([0,T];{\mathbb{D}}_{Z}^{n,p})$.

%One of the main results in Malliavin calculus is the Clark-Ocone formula, see Theorem 6.1.1 in \cite{Nualart2018}.
%\begin{theorem} \label{co}
%Let $F\in \mathbb{D}_{Z}^{1,2} \cap L^2(\Omega, \mathcal{F}_T, \mathbb{P})$. Then $F$ admits the following representation:
%$$
%F=E(F)+\int_0^T E_t(D^Z_t(F))dZ_t.
%$$
%\end{theorem}

The following theorem is an extension of the classical It\^o's Lemma for non-anticipating processes, see Proposition 4.3.1 in \cite{Alos2021c}.
\begin{theorem}[Anticipating It\^o's Formula]
\label{aito}
Consider a process of the form $$X_t=X_0+\int_0^t u_sdZ_s+\int_0^t v_s ds,$$ where $X_0$ is an $\mathcal{F}_0$-measurable random variable and $u$ and $v$ are $\mathcal{F}_t$-adapted processes in $L^2([0,T] \times \Omega)$.
Consider also a process $Y_t = \int_t^T \theta_sds$, for some $\theta \in \mathbb{L}_Z^{1,2}$. Let $F : [0,T] \times \mathbb{R}^2 \rightarrow \mathbb{R}$ be a $C^{1,2}([0,T] \times \mathbb{R}^2)$ function such that there exists a positive constant $C$ such that, for all $t \in [0,T]$, $F$ and its derivatives evaluated in $(t,X_t,Y_t)$ are bounded by $C$. Then it follows that for all $t \in [0,T]$,
\begin{align*}
F(t,X_t,Y_t) &= F(0,X_0,Y_0)+\int_0^t \partial_sF(s,X_s,Y_s)ds+\int_0^t \partial_xF(s,X_s,Y_s)u_s dZ_s\\
&\qquad +\int_0^t \partial_xF(s,X_s,Y_s)v_s ds+\int_0^t \partial_yF(s,X_s,Y_s)dY_s\\
&\qquad+\int_0^t \partial_{xy}^2F(s,X_s,Y_s)u_sD^{-}Y_sds +\frac{1}{2}\int_0^t \partial_{xx}^2F(s,X_s,Y_s)u_s^2 ds,
\end{align*}
where $D^{-}Y_s=\int_s^TD^Z_s\theta_rdr$ and the integral $\int_0^t \partial_xF(s,X_s,Y_s)u_s dZ_s$ is defined in the Skorohod sense since the process $\partial_xF(s,X_s,Y_s)u_s$ is not adapted.
\end{theorem}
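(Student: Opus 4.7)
The plan is to prove the formula via a partition/Taylor-expansion argument in the spirit of the classical It\^o proof, with the new ingredient being the anticipating contribution coming through $Y_t$. Fix $t\in[0,T]$, pick a partition $\pi=\{0=t_0<\cdots<t_n=t\}$, and write the telescoping sum
\begin{equation*}
F(t,X_t,Y_t)-F(0,X_0,Y_0)=\sum_{i=0}^{n-1}\left[F(t_{i+1},X_{t_{i+1}},Y_{t_{i+1}})-F(t_i,X_{t_i},Y_{t_i})\right].
\end{equation*}
A second-order Taylor expansion of each increment in $(s,x,y)$, combined with $\Delta X_i=\int_{t_i}^{t_{i+1}}u_s\,dZ_s+\int_{t_i}^{t_{i+1}}v_s\,ds$ and $\Delta Y_i=-\int_{t_i}^{t_{i+1}}\theta_s\,ds$, recovers, as $|\pi|\to 0$, the classical terms $\int_0^t\partial_sF\,ds$, $\int_0^t\partial_xF\,v_s\,ds$, $\int_0^t\partial_yF\,dY_s$ and the It\^o correction $\tfrac12\int_0^t\partial_{xx}^2F\,u_s^2\,ds$. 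The cross term $\partial_{xy}^2F\,\Delta X_i\Delta Y_i$ and the $(\Delta Y_i)^2$ term are of order $\Delta t^{3/2}$ and $\Delta t^2$ respectively, and thus vanish in the limit.

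The non-classical step is the treatment of
\begin{equation*}
\sum_{i}f_i\int_{t_i}^{t_{i+1}}u_s\,dZ_s,\qquad f_i:=\partial_xF(t_i,X_{t_i},Y_{t_i}),
\end{equation*}
since $f_i$ depends on $Y_{t_i}=\int_{t_i}^T\theta_r\,dr$ and is therefore not $\mathcal{F}_{t_i}$-measurable. The plan is to apply the Skorohod--It\^o integration-by-parts identity
\begin{equation*}
G\int_0^T U_s\,dZ_s=\int_0^T GU_s\,\delta Z_s+\int_0^T U_s D_s^Z G\,ds,
\end{equation*}
valid for $G\in\mathbb{D}^{1,2}_Z$ and adapted $U\in L^2([0,T]\times\Omega)$ (where $\delta Z_s$ stands for the Skorohod integral), to $G=f_i$ and $U=u\mathbf{1}_{[t_i,t_{i+1}]}$. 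Using the chain rule $D_s^Z f_i=\partial_{xx}^2F\cdot D_s^Z X_{t_i}+\partial_{xy}^2F\cdot D_s^Z Y_{t_i}$, adaptedness of $u$ kills $D_s^Z X_{t_i}$ for $s\geq t_i$, while $D_s^Z Y_{t_i}=\int_{t_i}^T D_s^Z\theta_r\,dr$ collapses to $D^{-}Y_s=\int_s^T D_s^Z\theta_r\,dr$. Summing over $i$ then produces precisely the anticipating correction $\int_0^t\partial_{xy}^2F\,u_sD^{-}Y_s\,ds$.

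Finally, one passes to the limit as $|\pi|\to 0$: the Skorohod Riemann sums $\sum_i\int_{t_i}^{t_{i+1}}f_iu_s\,\delta Z_s$ converge in $L^2(\Omega)$ to the announced Skorohod integral $\int_0^t\partial_xF(s,X_s,Y_s)u_s\,dZ_s$ by closability of the divergence operator on $\mathbb{L}_Z^{1,2}$ and the boundedness hypotheses on $F$, while the correction sums converge to $\int_0^t\partial_{xy}^2F\,u_sD^{-}Y_s\,ds$ by continuity of the integrand and the integrability built into $\theta\in\mathbb{L}_Z^{1,2}$. A standard truncation/mollification first reduces matters to the case where $F$ and its derivatives are bounded with compact support and $\theta$ is smooth, so that all Malliavin norms stay under control. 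The principal technical obstacle is the uniform-in-mesh $L^2$ control of the anticipating Skorohod Riemann sums; this is handled by approximating the step process $\sum_i f_iu_s\mathbf{1}_{[t_i,t_{i+1}]}$ by $\partial_xF(s,X_s,Y_s)u_s$ in the $\mathbb{L}_Z^{1,2}$-norm and invoking closability of $\delta$ together with dominated convergence, after which the remaining identifications are routine.
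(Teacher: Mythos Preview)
The paper does not actually prove this theorem: it is stated in the appendix as a known tool, with a reference to Proposition~4.3.1 in \cite{Alos2021c}, so there is no in-paper proof to compare against. Your outline is the standard partition/Taylor-expansion argument underlying such anticipating It\^o formulas (going back to Nualart--Pardoux and its descendants), and it is essentially correct.

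One small point worth tightening: when you write that $D_s^Z Y_{t_i}=\int_{t_i}^T D_s^Z\theta_r\,dr$ ``collapses to'' $D^{-}Y_s=\int_s^T D_s^Z\theta_r\,dr$, note that the statement does \emph{not} assume $\theta$ is adapted, so for $s\in[t_i,t_{i+1}]$ the discrepancy $\int_{t_i}^{s} D_s^Z\theta_r\,dr$ need not vanish pointwise. It does vanish in the limit because its contribution is of order $|\pi|$ after multiplying by $u_s$, integrating over $[t_i,t_{i+1}]$, and summing; you should say this explicitly rather than suggesting a pointwise identification. With that clarification, the sketch is sound.
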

\section{The Price of an Asian Call Option under the Bachelier Model}\label{bc_asian_price}
In this section by prove the closed form formula for the price of an arithmetic Asian option under the constant volatility Bachelier model.
\begin{theorem} \label{AsianPrice}
Consider the model \textnormal{(\ref{B_Epm})} in the case of constant volatility $\sigma$. Then, the price of an arithmetic Asian call option for $t \in [0,T ]$ satisfies
$$
B_A(t,S_t,y_t,k,\sigma)=\left(S_t\frac{T-t}{T}+\frac{y_t}{T}-k\right)N(d_A(k,\sigma))+\left(\frac{\sigma(T-t)\sqrt{T-t}}{T\sqrt{3}}\right)n(d_A(k,\sigma)),
$$
where
$$
d_A(k,\sigma)=\frac{S_t\frac{T-t}{T}+\frac{y_t}{T}-k}{\left(\frac{\sigma(T-t)\sqrt{T-t}}{T\sqrt{3}}\right)},
$$
$T$  is the maturity, $S_t$ is the stock price at time $t$, $k$ is the strike price, and $y_t=\int_0^tS_udu$ is the state variable.
\end{theorem}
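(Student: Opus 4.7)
The plan is to show that, conditional on $\mathcal{F}_t$, the arithmetic average $A_T$ is Gaussian, and then invoke the standard closed-form formula for a call payoff on a normal random variable.

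First I would split off the part of the integral that is already $\mathcal{F}_t$-measurable by writing
$$A_T = \frac{y_t}{T} + \frac{1}{T}\int_t^T S_u\, du.$$
Under constant-volatility Bachelier dynamics, for $u \geq t$ we have $S_u = S_t + \sigma(W_u - W_t)$, so
$$A_T = \frac{y_t}{T} + S_t\,\frac{T-t}{T} + \frac{\sigma}{T}\int_t^T (W_u - W_t)\,du.$$
A stochastic Fubini step (or equivalently the computation leading to \eqref{dACE}, specialised to constant $\sigma$) turns the time integral of the Brownian motion into a weighted Wiener integral:
$$\int_t^T (W_u - W_t)\,du = \int_t^T (T-s)\, dW_s.$$

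Next I would read off the conditional law. The random variable $\int_t^T (T-s)\,dW_s$ is independent of $\mathcal{F}_t$ and centred Gaussian with variance $\int_t^T (T-s)^2\,ds = (T-t)^3/3$. Hence, given $\mathcal{F}_t$, $A_T$ is normal with mean
$$\mu_t := S_t\,\frac{T-t}{T} + \frac{y_t}{T}\qquad\text{and variance}\qquad \tau_t^2 := \frac{\sigma^2 (T-t)^3}{3 T^2},$$
so $\tau_t = \sigma(T-t)\sqrt{T-t}/(T\sqrt{3})$, which is exactly the denominator appearing in $d_A(k,\sigma)$.

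Finally I would invoke the elementary identity for a Gaussian call payoff: if $Z \sim N(\mu,\tau^2)$, then
$$\mathbb{E}\bigl[(Z-k)_+\bigr] = (\mu-k)\,N\!\left(\frac{\mu-k}{\tau}\right) + \tau\, n\!\left(\frac{\mu-k}{\tau}\right),$$
which is verified by a direct computation (split the integral on $\{Z \geq k\}$, change variables to $z = (Z-\mu)/\tau$, and use $n'(x) = -x\,n(x)$). Applying this with $\mu=\mu_t$, $\tau=\tau_t$ gives exactly the claimed formula for $B_A(t,S_t,y_t,k,\sigma)$. There is no real obstacle here; the only subtlety is to be careful with the stochastic Fubini application and to make sure that the resulting integrand is deterministic so that the conditional Gaussianity is immediate.
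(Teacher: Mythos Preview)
Your proposal is correct and follows essentially the same route as the paper: both identify that, conditional on $\mathcal{F}_t$, the average $A_T$ is Gaussian with mean $S_t\frac{T-t}{T}+\frac{y_t}{T}$ and variance $\frac{\sigma^2(T-t)^3}{3T^2}$, and then evaluate the Gaussian call payoff directly. The only cosmetic difference is that the paper writes the decomposition starting from $S_0$ and then splits the stochastic integral $\frac{\sigma}{T}\int_0^T(T-s)\,dW_s$ at $t$, whereas you split $A_T$ at $t$ first and use $S_u=S_t+\sigma(W_u-W_t)$; the resulting computations are identical.
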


\begin{proof} Firstly, notice the following representation 
\begin{align*}
\begin{split}
  \frac{1}{T}\int_0^T S_t dt &= \frac{1}{T}\int_0^T \left(S_0+\int_0^t \sigma dW_u\right) dt = S_0+\frac{\sigma}{T}\int_0^T W_s ds = \\
   &= S_0+\frac{\sigma}{T}\int_0^t (T-s) dW_s +\frac{\sigma}{T}\int_t^T (T-s) dW_s.
\end{split}
\end{align*}
The last term on the right hand side is normally distributed with mean zero and variance equal to 
\begin{align*}
\begin{split}
  \mathbb{E}\left( \frac{\sigma}{T}\int_t^T (T-s) dW_s\right)^2 &= \frac{\sigma^2}{T^2}\int_t^T (T-s)^2 ds= \frac{\sigma ^2 (T-t)^3}{3 T^2}.
\end{split}
\end{align*}

Due to the risk neutral pricing argument we know that $$B_A(t,S_t,y_t,k,\sigma)=\mathbb{E}_t\left(\frac{1}{T}\int_0^T S_t dt-k\right)_+.$$ Therefore, using the above formulas this can be equivalently written as
\begin{align*}
\begin{split}
  \mathbb{E}_t\left(\frac{1}{T}\int_0^T S_t dt-k\right)_+ &= \mathbb{E}_t\left( S_t\frac{T-t}{T}+\frac{y_t}{T}-k +\frac{\sigma}{T}\int_t^T (T-s) dW_s\right)_+ = \\
   &= \frac{1}{\sqrt{2 \pi}}\int_{\frac{k-u_t}{z_t}}^{\infty}\left(u_t-k+z_t x\right)e^{-\frac{x^2}{2}}dx = \\
   &=\left(u_t-k\right)N\left(\frac{u_t-k}{z_t}\right)+z_t n\left(\frac{u_t-k}{z_t}\right),
\end{split}
\end{align*}
where $u_t = S_t\frac{T-t}{T}+\frac{y_t}{T}$ and $z_t = \frac{\sigma(T-t)\sqrt{T-t}}{T\sqrt{3}}$.

The last step allows us to complete the proof. \end{proof}
\appendix

\bibliography{literature}

\begin{thebibliography}{10}

\bibitem{Alos2006}
Elisa Al{\`{o}}s.
\newblock {A generalization of the Hull and White formula with applications to option pricing approximation}.
\newblock {\em Finance and Stochastics}, 10(3):353--365, 2006.

\bibitem{Alos2021c}
Elisa Al{\`{o}}s, David {Garc{\'{i}}a Lorite}, and Dariusz Gatarek.
\newblock {\em {Malliavin Calculus in Finance}}.
\newblock Chapman and Hall/CRC, 2021.

\bibitem{Alos2018}
Elisa Al{\`{o}}s, David Garc{\'{i}}a-Lorite, and Aitor~Muguruza Gonzalez.
\newblock {On Smile Properties of Volatility Derivatives: Understanding the VIX Skew}.
\newblock {\em SIAM Journal on Financial Mathematics}, 13(1):32--69, 2022.

\bibitem{Alos2017}
Elisa Al{\`{o}}s and Jorge~A. Le{\'{o}}n.
\newblock {On the curvature of the smile in stochastic volatility models}.
\newblock {\em SIAM Journal on Financial Mathematics}, 8:373--399, 2017.

\bibitem{Alos2007a}
Elisa Al{\`{o}}s, Jorge~A Le{\'{o}}n, and Josep Vives.
\newblock {On the short-time behavior of the implied volatility for jump-diffusion models with stochastic volatility}.
\newblock {\em Finance and Stochastics}, 11(4):571--589, 2007.

\bibitem{mp2022}
Elisa Al\`os, Eulalia Nualart, and Makar Pravosud.
\newblock {On the implied volatility of Asian options under stochastic volatility models}.
\newblock {\em Applied Mathematical Finance}, 30:249--274, 2023.

\bibitem{ASENS_1900_3_17__21_0}
Louis Bachelier.
\newblock {Th{\'{e}}orie de la sp{\'{e}}culation}.
\newblock {\em Annales scientifiques de l'{\'{E}}cole Normale Sup{\'{e}}rieure}, 3e s{\'{e}}rie,:21--86, 1900.

\bibitem{https://doi.org/10.1002/fut.22315}
Jaehyuk Choi, Minsuk Kwak, Chyng~Wen Tee, and Yumeng Wang.
\newblock {A Black-Scholes user's guide to the Bachelier model}.
\newblock {\em Journal of Futures Markets}, 42(5):959--980, 2022.

\bibitem{euch}
Omar El~Euch, Masaaki Fukasawa, Jim Gatheral, and Mathieu Rosenbaum.
\newblock {Short-Term At-the-Money Asymptotics under Stochastic Volatility Models}.
\newblock {\em SIAM Journal on Financial Mathematics}, 10(2):491--511, 2019.

\bibitem{Figueroa-Lopez2016}
Jos{\'{e}}~E. Figueroa-L{\'{o}}pez and Sveinn {\'{O}}lafsson.
\newblock {Short-term asymptotics for the implied volatility skew under a stochastic volatility model with L{\'{e}}vy jumps}.
\newblock {\em Finance and Stochastics}, 20(4):973--1020, 2016.

\bibitem{fouque}
Jean-Pierre Fouque, George Papanicolaou, and K.~Ronnie Sircar.
\newblock {From the implied volatility skew to a robust correction to Black-Scholes American option prices}.
\newblock {\em International Journal of Theoretical and Applied Finance}, 04(04):651--675, 2001.

\bibitem{doi:10.1080/14697688.2016.1197410}
Masaaki Fukasawa.
\newblock {Short-time at-the-money skew and rough fractional volatility}.
\newblock {\em Quantitative Finance}, 17(2):189--198, 2017.

\bibitem{Galeeva2022}
Roza Galeeva and Ehud Ronn.
\newblock {Oil futures volatility smiles in 2020: Why the bachelier smile is flatter}.
\newblock {\em Review of Derivatives Research}, 25(2):173--187, 2022.

\bibitem{HoGoodmanLaurieS2003}
Jeffrey Ho and Laurie Goodman.
\newblock {Interest Rates-Normal or Lognormal?}, 2003.

\bibitem{jaeckel}
Peter Jaeckel.
\newblock {Implied Normal Volatility}.
\newblock {\em Wilmott}, 2017:54--57, 2017.

\bibitem{Nualart2018}
David Nualart and Eulalia Nualart.
\newblock {\em {Introduction to Malliavin Calculus}}.
\newblock Cambridge University Press, 2018.

\bibitem{Pirjol2016}
Dan Pirjol and Lingjiong Zhu.
\newblock {Short maturity Asian options in local volatility models}.
\newblock {\em SIAM Journal on Financial Mathematics}, 7:947--992, 2016.

\bibitem{https://doi.org/10.1111/j.1467-9965.2007.00326.x}
Walter Schachermayer and Josef Teichmann.
\newblock {How close are the option pricing formulas of Bachelier and Black-Merton-Scholes?}
\newblock {\em Mathematical Finance}, 18(1):155--170, 2008.

\bibitem{Terakado2019}
Satoshi Terakado.
\newblock {On the Option Pricing Formula Based on the Bachelier Model}.
\newblock {\em SSRN Electronic Journal}, 2019.

\end{thebibliography}

\end{document}